\newcommand{\Div}{\operatorname{div}}
\newcommand{\ceil}[1]{{\left\lceil {#1} \right\rceil}}
\newcommand{\floor}[1]{{\left\lfloor {#1} \right\rfloor}}
\newcommand{\res}{\operatorname{res}}
\newcommand{\supp}{\operatorname{supp}}
\newtheorem{theorem}{Theorem}[section]
\newtheorem{corollary}{Corollary}[section]
\newtheorem{lemma}[theorem]{Lemma}
\newtheorem{proposition}{Proposition}
\theoremstyle{definition}
\newtheorem{definition}[theorem]{Definition}
\newtheorem{example}{Example}
\title[Multi-point Codes from the GGS Curves]
      {Multi-point Codes from the GGS Curves}
\author[Chuangqiang Hu, Shudi Yang]{}
\subjclass{Primary: 14H55, 11R58, 11T71. }
 \keywords{Algebraic geometric code, GGS curve, Weierstrass semigroup, pure Weierstrass gap.}
\email{huchq@mail2.sysu.edu.cn}
\email{yangshd3@mail2.sysu.edu.cn}
\thanks{This work is partially supported by the NSFC (11701317, 11531007, 11571380, 11701320, 61472457) and Tsinghua University startup fund. This work is also partially supported by China Postdoctoral Science Foundation Funded Project (2017M611801), Jiangsu Planned Projects for Postdoctoral Research Funds (1701104C), Guangzhou Science and Technology Program (201607010144) and the Natural Science Foundation of Shandong Province of China (ZR2016AM04). }
\thanks{$^*$Corresponding author: Shudi Yang.}
\begin{document}
\maketitle

\centerline{\scshape Chuangqiang Hu }
\medskip
{\footnotesize
  \centerline{Yau Mathematical Sciences Center, Tsinghua University,}
    \centerline{Peking, 100084, China}
}

\medskip

\centerline{\scshape Shudi Yang$^*$}
\medskip
{\footnotesize
  \centerline{School of Mathematical Sciences, Qufu Normal University}
    \centerline{Shandong, 273165, China }
}

\bigskip

\centerline{(Communicated by ***)}

\begin{abstract}
This paper is concerned with the construction of algebraic-geometric (AG) codes defined from GGS curves. It is of significant use to describe bases for the Riemann-Roch spaces associated with {some rational places}, 
which enables us to study multi-point AG codes. 
Along this line, we characterize explicitly the Weierstrass semigroups and pure gaps by an exhaustive computation {for the basis of} Riemann-Roch spaces from GGS curves. In addition, we determine the floor of a certain type of divisor and investigate the properties of AG codes. Multi-point codes with excellent parameters are found, among which, a presented code with parameters $ [216,190,\geqslant 18] $ over $ \mathbb{F}_{64} $ yields a new record.
\end{abstract}

\section{Introduction}\label{sec:intro}

In the early 1980s, Goppa \cite{goppa1977codes} constructed algebraic geometric  codes (AG codes for short) from algebraic curves. Since then, the study of AG codes becomes an important instrument in the theory of error-correcting codes.  
Roughly speaking, the parameters of an AG code are good
when the underlying curve has many rational points with respect
to its genus. For this reason maximal curves, that is curves attaining the Hasse-Weil upper bound, have been widely investigated in the literature: for example the Hermitian curve and its quotients, the Suzuki curve, the Klein quartic and the GK curve. In this work we will study multi-point AG codes on the GGS curves. 

In order to construct good AG codes we need to study Weierstrass semigroups and pure gaps. Their use dates back to the theory of one-point codes. For example, the authors in \cite{Guruswami,stichtenoth,Yang,Yang2} examined one-point codes from Hermitian curves and developed efficient methods to decode them.  Korchm\'{a}ros and Nagy \cite{Korchmaros} computed the Weierstrass semigroup of a degree-three closed point of the Hermitian curve. 
Matthews \cite{matthews2005weierstrass} determined the Weierstrass semigroup of
any $ r $-tuple rational points on the quotient of a Hermitian curve. As is known, Weierstrass pure gap is also a useful tool in coding theory.   
Garcia, Kim and Lax improved the Goppa bound {using Weierstrass gaps} at one place in \cite{garcia1993consecutive,garcia1992goppa}.
The concept of pure gaps of a pair of points on a curve was initiated by Homma and Kim \cite{Homma2001Goppa}, and it had been pushed forward by Carvalho and Torres \cite{carvalho2005goppa} to several points. 
Maharaj and Matthews \cite{maharaj2006floor} extended this construction by introducing the notion of the floor of a divisor and obtained improved bounds on the parameters of AG codes. The authors in \cite{Bartoli2016,matthews2001} counted the gaps and pure gaps at two points on Hermitian curves and those from some specific Kummer extensions, respectively. Recently, Yang and Hu \cite{Yanghu1,Yanghu2} extended their work by investigating Weierstrass semigroups and pure gaps at many points. 

We mention that Maharaj \cite{maharaj2004code} showed that Riemann-Roch spaces of divisors from fiber
products of Kummer covers of the projective line, can be decomposed as a direct sum of Riemann-Roch
spaces of divisors of the projective line. Maharaj, Matthews and Pirsic \cite{Maharaj2005riemann} determined explicit bases for large classes of Riemann-Roch spaces
of the Hermitian function field. Along this research line, Hu and Yang \cite{HuYang2016Kummer} gave other explicit bases for Riemann-Roch spaces of divisors over Kummer extensions, which makes it convenient to determine the pure gaps.

In this work, we focus our attention on the GGS curves, which are maximal curves constructed by Garcia,  G{\"u}neri and Stichtenoth \cite{garcia2010GGS} over $ \mathbb{F}_{q^{2n}} $ defined by the equations
\begin{align*} 
\left\{\begin{array}{ll}
x^{q \phantom{^2}}+x&=y^{q+1},\\
y^{q^2}-y&=z^m,
\end{array}	
\right.
\end{align*}
where $ q $ is a prime power and $ m=(q^n+1)/(q+1) $ with $ n >1 $ to be an odd integer. Obviously the GGS curve is a generalization of the GK curve initiated by Giulietti and Korchm{\'a}ros \cite{Giulietti2008} where we take $ n=3 $. Recall that Fanali and Giulietti \cite{Fanali2010GK} exhibited one-point AG codes on the GK curves and obtained linear codes with better parameters with respect to those { previously known}. Two-point and multi-point AG codes on the GK maximal curves were studied in \cite{Bartoli2017GK} and \cite{Castellanos2016GK}, respectively. Bartoli, Montanucci and Zini \cite{Bartoli2017} examined one-point AG codes from the GGS curves.
Inspired by the above work and \cite{Masuda2,HuYang2016Kummer}, here we will investigate multi-point AG codes arising from GGS curves. To be precise, an explicit basis for the corresponding Riemann-Roch space is determined by constructing a related set of lattice points. The properties of AG codes from GGS curves are also considered.  Then the basis is utilized to  characterize the Weierstrass semigroups and pure gaps with respect to several rational places. In addition, we give an effective algorithm to compute the floor of divisors. Finally, our results will lead us to find new codes with better parameters in comparison with the existing codes in  MinT's Tables \cite{MinT}. A new record-giving $ [216,190,\geqslant 18] $-code over $ \mathbb{F}_{64} $ is presented as one of the examples.

The remainder of the paper is organized as follows. 
Section \ref{sec:Bases} focuses on the construction of bases for the Riemann-Roch space from GGS curves. Section \ref{sec:proAG} studies the properties of the related AG codes. In Section \ref{sec:Weiersemipureg} we illustrate the Weierstrass semigroups and the pure gaps. Section \ref{sec:floor} devotes to the floor of divisors from GGS curves.  Finally, in Section \ref{sec:Examples} we employ our results to construct multi-point codes with excellent parameters.

\section{Bases for Riemann-Roch spaces { of the GGS curves}}\label{sec:Bases}

Throughout this paper, we always let $q$ be a prime power and $ n >1 $ be an odd integer. The GGS curve $ \textup{GGS}(q,n) $ over $ \mathbb{F}_{q^{2n}} $ is defined by the equations
\begin{align}\label{eq:GGScurve}
\left\{\begin{array}{ll}
x^{q \phantom{^2}}+x&=y^{q+1},\\
y^{q^2}-y&=z^m,
\end{array}	
\right.
\end{align}
where $ m=(q^n+1)/(q+1) $. 
The genus of $ \textup{GGS}(q,n) $ is $  (q-1)(q^{n+1}+q^n-q^2)/2 $ and there are $ q^{2n+2} -q^{n+3 }+ q^{n+2} + 1 $ rational points, 
see \cite{garcia2010GGS} for more details. Especially when $ n=3 $, it becomes the well-known maximal curve introduced by Giulietti and Korchm\'{a}ros \cite{Giulietti2008}, the so-called GK curve, which is not a subcover of the corresponding Hermitian curve.

{Let $ \alpha,\beta,\gamma \in \mathbb{F}_{q^{2n}} $ such that $ \alpha^q+\alpha=\beta^{q+1}  $ and $ \beta^{q^2} -\beta = \gamma^m $. In particular, if $ \gamma = 0  $, then $ \alpha$, $\beta \in \mathbb{F}_{q^{2}} $. 
Denote by $ \mathcal{P}_{\alpha,\beta,\gamma} $ the rational place of the function field $  \mathbb{F}_{q^{2n}}( \textup{GGS}(q,n)) $ centered at the rational affine point with coordinates $(\alpha,\beta,\gamma) $ and $ P_{\infty} $ the place located at infinity. They are exactly all the $  \mathbb{F}_{q^{2n}} $-rational places}. 
Take $ Q_\beta:= \sum_{\alpha^q+\alpha=\beta^{q+1}}\mathcal{P}_{\alpha,\beta,0}  $ where $ \beta \in \mathbb{F}_{q^2} $ and we always view $ \mathbb{F}_{q^2}  $ as a subfield of $ \mathbb{F}_{q^{2n}} $. Notice that $ \deg (Q_\beta)=q $. {For later use, we write $ P_\mu := \mathcal{P}_{\alpha_\mu,0,0} $ where $ \alpha_{\mu} \in \mathbb{F}_{q^{2}} $ with $ \alpha_\mu^q+\alpha_\mu=0 $ and $ 0\leqslant \mu <q $}. Particularly we denote $ P_0:=\mathcal{P}_{0,0,0} $ and
$ Q_0:=P_0+P_1+\cdots+P_{q-1} $.

The following proposition describes some principle divisors {of} the GGS curve.
\begin{proposition}\label{prop:divisor}
	Let the curve $ \textup{GGS}(q,n) $ be given in \eqref{eq:GGScurve} and assume that {$ \alpha_{\mu} \in \mathbb{F}_{q^{2}} $} with $ 0 \leqslant \mu <q $ are the solutions of $ x^q+x=0 $. Then we obtain
	\begin{enumerate}
		\item [$ (1) $]	$ \Div (x-\alpha_{\mu})=m(q+1)P_{\mu}-m(q+1)P_{\infty}$,
		\item [$ (2) $] $ \Div  (y-\beta)=m Q_{\beta} -mq P_{\infty} $  for  $ \beta \in \mathbb{F}_{q^2} $,
		\item [$ (3) $] $ \Div (z)=\sum_{\beta \in \mathbb{F}_{q^2}} Q_{\beta}-q^3P_{\infty}$.
	\end{enumerate}		
\end{proposition}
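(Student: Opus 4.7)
The plan is to compute each divisor by descending from the GGS curve to the Hermitian sub-curve defined by $x^q+x=y^{q+1}$, exploiting that the extension $\mathbb{F}_{q^{2n}}(x,y,z)/\mathbb{F}_{q^{2n}}(x,y)$ given by $z^m=y^{q^2}-y$ is Kummer of degree $m$ (since $\gcd(m,p)=1$, as $m\mid q^n+1$ and $p\mid q$), so ramification is controlled by $v_P(y^{q^2}-y)\pmod m$.

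First, I recall the standard Hermitian divisors on the curve $H\colon x^q+x=y^{q+1}$: writing $P^H_\mu$, $P^H_\infty$, and $Q^H_\beta=\sum_{\alpha^q+\alpha=\beta^{q+1}}P^H_{\alpha,\beta}$ (a sum of $q$ rational places), one has $\Div_H(x-\alpha_\mu)=(q+1)P^H_\mu-(q+1)P^H_\infty$ and $\Div_H(y-\beta)=Q^H_\beta-qP^H_\infty$. Next I analyze the ramification in the Kummer step. Since $y^{q^2}-y=\prod_{\beta\in\mathbb{F}_{q^2}}(y-\beta)$, its Hermitian divisor is $\sum_{\beta}Q^H_\beta - q^3 P^H_\infty$; each summand has valuation $1$ at the places of $Q^H_\beta$ and valuation $-q^3$ at $P^H_\infty$. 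Because $\gcd(1,m)=1$ and $\gcd(q^3,m)=1$, every one of these places is totally ramified with ramification index $m$ in the extension, with a unique place above: namely $\mathcal{P}_{\alpha,\beta,0}$ above $P^H_{\alpha,\beta}$ and $P_\infty$ above $P^H_\infty$.

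With the ramification indices in hand, items $(1)$ and $(2)$ follow from the formula $v_{\tilde{P}}(f)=e(\tilde{P}\mid P)\,v_P(f)$ for $f$ in the base field: multiplying each valuation in the Hermitian divisors by $e=m$ gives
\begin{align*}
\Div(x-\alpha_\mu)&=m(q+1)P_\mu-m(q+1)P_\infty,\\
\Div(y-\beta)&=mQ_\beta-mqP_\infty,
\end{align*}
using that $Q_\beta=\sum_{\alpha^q+\alpha=\beta^{q+1}}\mathcal{P}_{\alpha,\beta,0}$ collects exactly the places above $Q^H_\beta$. For item $(3)$, I take the defining equation $z^m=y^{q^2}-y=\prod_{\beta\in\mathbb{F}_{q^2}}(y-\beta)$ and apply $\Div$ to both sides: using $(2)$ already proved,
\begin{equation*}
m\,\Div(z)=\sum_{\beta\in\mathbb{F}_{q^2}}\Div(y-\beta)=m\sum_{\beta\in\mathbb{F}_{q^2}}Q_\beta-mq\cdot q^2\,P_\infty,
\end{equation*}
and dividing by $m$ yields the asserted expression.

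The only delicate step is justifying that every place of interest is totally ramified; the main obstacle is verifying $\gcd(q^3,m)=1$, which reduces to $\gcd(q,m)=1$, an immediate consequence of $m\mid q^n+1$. Once this is noted, the three identities become essentially a transport of the well-known Hermitian divisors through a totally ramified Kummer cover, together with the tautology $\Div(z)=\frac{1}{m}\Div(z^m)$ for the last one.
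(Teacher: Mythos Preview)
Your argument is correct. The paper states this proposition without proof, so there is no argument to compare against; your route---pulling back the standard Hermitian divisors $\Div_H(x-\alpha_\mu)=(q+1)P^H_\mu-(q+1)P^H_\infty$ and $\Div_H(y-\beta)=Q^H_\beta-qP^H_\infty$ through the degree-$m$ Kummer cover $z^m=y^{q^2}-y$, after checking total ramification at the relevant places via $\gcd(m,1)=\gcd(m,q^3)=1$---is the natural one and exactly what a reader would supply. The deduction of $(3)$ from $(2)$ by taking $\Div$ of $z^m=\prod_{\beta\in\mathbb{F}_{q^2}}(y-\beta)$ and dividing by $m$ is clean; one could alternatively read off $v_{P_\infty}(z)$ and $v_{\mathcal{P}_{\alpha,\beta,0}}(z)$ directly from Kummer theory, but your way is shorter once $(2)$ is in hand.
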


For convenience, we use $ Q_{ \nu} $ ($ 0 \leqslant \nu \leqslant q^2-1 $) to represent the divisors $ Q_{\beta} $  $( \beta \in \mathbb{F}_{q^2}) $. 
{In particular, the symbol $ Q_0 $ performs well in both cases $ v = 0 \in  \mathbb{Z}  $ and $ \beta=0 \in \mathbb{F}_{q^2} $. }

For a function field $ F $, the Riemann-Roch vector space with respect to a divisor $ G $ is defined by
\begin{align*}
\mathcal{L}(G)=\Big\{f \in F\,\,\Big| \,\, \Div(f)+G \geqslant 0 \Big\} \cup \{0\}.
\end{align*} 
Let $ \ell(G) $ be the dimension of $ \mathcal{L}(G) $. From the famous {Riemann-Roch Theorem \cite[Theorem 1.5.15]{stichtenoth}}, we know that
\begin{align*}
\ell(G)-\ell(W-G)=1-g+\deg (G),
\end{align*}
where $ W $ is a canonical divisor and $ g $ is the genus of the associated curve. 

{In this section, we consider divisors supported at all rational places of the form $ G:=\sum_{\mu=0}^{q-1} r_{\mu}P_{\mu} + \sum_{\nu=1}^{q^2-1}s_{\nu} Q_{\nu} +tP_{\infty} $, where $ r_{\mu} $'s, $ s_{\nu}$'s and $ t $ are integers.} We wish to show that the Riemann-Roch space $ \mathcal{L}(G) $ is generated by some elements, say $ E_{i,\,\bm{j},\,\bm{k}} $ {for some $ i,\bm{j},\bm{k} $}, and the number of such elements equals $ \ell(G) $. For this purpose, we proceed as follows.

Let $ {\bm{j}}=(j_1,j_2,\cdots,j_{q-1}) $ and $ {\bm{k}}=(k_1,k_2,\cdots,k_{q^2-1}) $. 
For $ (i,{\bm{j}},{\bm{k}})  \in \mathbb{Z}^{q^2+q-1} $, we define
\begin{align}\label{eq:Eijk}
E_{i,\,\bm{j},\,\bm{k}} := z^i \prod_{\mu=1}^{q-1}(x-\alpha_{\mu})^{j_{\mu}} \prod_{\nu=1}^{q^2-1}(y- \beta_{\nu})^{k_{\nu}}  .
\end{align}
{ Here and thereafter, we denote $  |{\bm{v}}| $ to be the sum of all the coordinates of a given vector $ {\bm{v}} $. Then $ |{\bm{j}}|=\sum_{\mu=1}^{q-1} j_{\mu} $ and  $ |{\bm{k}}|=\sum_{\nu=1}^{q^2-1} k_{\nu} $.} By Proposition \ref{prop:divisor}, one can compute the divisor of $ E_{i,\,\bm{j},\,\bm{k} } $:
\begin{align}\label{eq:E}
\Div (E_{i,\,\bm{j},\,\bm{k}} )=&iP_0 + \sum_{\mu=1}^{q-1}(i+m(q+1) j_{\mu}) P_{\mu}
+\sum_{\nu=1}^{q^2-1}(i+m k_{\nu})Q_{\nu} \nonumber \\
& -\big(q^3 i+m(q+1)|{\bm{j}}|+mq|{\bm{k}}| \big) P_{\infty} . 
\end{align}

For later use, we denote by $ \mathbb{N}_0 := \mathbb{N}\cup \{0\} $ the set of nonnegative integers
and denote by $ \lceil x \rceil $ the smallest integer not less than $ x $. It is easy to show
that $ j = \left\lceil  {a}/{b} \right\rceil $ if and only if
$ 0 \leqslant b j -a < b $, where $ b \in \mathbb{N} $ and $ a\in \mathbb{Z} $.

Put $ {\bm{r}}=(r_0,r_1,\cdots,r_{q-1}) $ and $ {\bm{s}}=(s_1,s_2,\cdots,s_{q^2-1}) $.
Let us define a set of lattice points for $ (\bm{r},\,\bm{s},\,t)\in \mathbb{Z}^{q^2+q} $,
\begin{align*}
\Omega_{\bm{r},\,\bm{s},\,t}  := \Big\{& (i,\bm{j},\,\bm{k})
\,\,\Big| \,\, i+r_0 \geqslant 0, \\
& 0 \leqslant i + m(q+1)j_{\mu} + r_{\mu} < m(q+1) \text{ for }  \mu =1,\cdots, q-1,\\
& 0 \leqslant i+m k_{\nu}+s_{\nu} <m \text{ for }  \nu =1,\cdots, q^2-1,\\
&q^3 i+m(q+1)|{\bm{j}}|+mq|{\bm{k}}|  \leqslant t
\Big\},
\end{align*}
or equivalently,
\begin{align}
\Omega_{\bm{r},\,\bm{s},\,t}  := \Big\{& (i,\bm{j},\,\bm{k})
\,\,\Big| \,\,i+r_0 \geqslant 0, \nonumber \\
& j_{\mu} =  \ceil{\frac{-i-r_{\mu}}{m(q+1)}  }  \text{ for }   \mu =1,\cdots, q-1,\nonumber \\
& k_{\nu} =  \ceil{\frac{-i-s_{\nu}}{m}  }  \text{ for }   \nu =1,\cdots, q^2-1,\nonumber \\
&q^3 i+m(q+1)|{\bm{j}}|+mq|{\bm{k}}|    \leqslant t
\Big\}.\label{eq:omegarceiljmu}
\end{align}

Our key result depends on the following lemma. However, {its proof is technical and it will be completed later.} 

\begin{lemma}\label{thm:omega}
	{Assume that $ t\geqslant 2g-1 -q^3 w $, where $ w $ is the minimum element of all the coordinates of $ \bm{r} $ and $ \bm{s} $, then the cardinality of $ \Omega_{\bm{r},\,\bm{s},\,t} $ is }
	\begin{equation*}
	\#\Omega_{\bm{r},\,\bm{s},\,t} = 1-g+t+|{\bm{r}}|+q|{\bm{s} }|.
	\end{equation*}
\end{lemma}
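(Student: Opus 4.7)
The plan is to parametrize $\Omega_{\bm{r},\bm{s},t}$ by $i$ alone. In the equivalent formulation \eqref{eq:omegarceiljmu}, the coordinates $j_\mu$ and $k_\nu$ are uniquely determined by $i$ via the ceiling expressions, so
\begin{equation*}
\#\Omega_{\bm{r},\bm{s},t} \;=\; \#\bigl\{\, i \geqslant -r_0 \,:\, q^{3}i+m(q+1)|\bm{j}(i)| + mq|\bm{k}(i)| \leqslant t \,\bigr\},
\end{equation*}
which collapses the three-dimensional lattice point count to a one-dimensional enumeration over admissible integers $i$.

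Next I would linearize the pole inequality. Introducing the residues
\begin{equation*}
\delta_\mu(i) := m(q+1)j_\mu+i+r_\mu \in [0,\,m(q+1)),\qquad
\eta_\nu(i) := m k_\nu+i+s_\nu \in [0,\,m),
\end{equation*}
and using the telescoping identity $q^{3}-(q-1)-q(q^{2}-1)=1$, a direct substitution gives
\begin{equation*}
q^{3}i + m(q+1)|\bm{j}| + mq|\bm{k}| \;=\; i - (|\bm{r}|-r_0) - q|\bm{s}| + \Delta(i),
\end{equation*}
where $\Delta(i) := \sum_{\mu=1}^{q-1}\delta_\mu(i) + q\sum_{\nu=1}^{q^{2}-1}\eta_\nu(i) \geqslant 0$. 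Setting $T := t+|\bm{r}|+q|\bm{s}|$ and $h := i+r_0$, the condition becomes $h + \Delta'(h) \leqslant T$, where $\Delta'(h) := \Delta(h-r_0)$ is a nonnegative function periodic in $h$ with period $m(q+1)$.

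From $g=(q-1)(q^{n+1}+q^n-q^2)/2$ and $m=(q^n+1)/(q+1)$ one extracts the identity $(q^{2}-1)(q^{n}+1) = 2g+q^{3}-1$, and from this a uniform bound $\Delta(i) \leqslant 2g$ should follow by summing the maximal values of the residues. Combined with $|\bm{r}|+q|\bm{s}| \geqslant q^{3}w$ (since every coordinate of $\bm{r}$ and $\bm{s}$ is at least $w$), the hypothesis $t \geqslant 2g-1-q^{3}w$ yields $T \geqslant 2g-1$. In particular every $h \in [0,\,T-2g]$ automatically satisfies $h+\Delta'(h)\leqslant T$, contributing $T-2g+1$ "trivial" points. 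The desired identity $\#\Omega_{\bm{r},\bm{s},t} = T+1-g$ thus reduces to showing that exactly $g$ of the remaining $2g$ candidates $h \in [T-2g+1,\,T]$ satisfy $\Delta'(h) \leqslant T-h$.

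The main technical obstacle is precisely this $g$-out-of-$2g$ count. My plan is to distribute the candidate values $h$ among the residue classes modulo $m(q+1)$ and use the explicit formulas $\delta_\mu(h-r_0) = \bigl((r_0-r_\mu-h)\bmod m(q+1)\bigr)$ together with the analogous identity for $\eta_\nu$ to tabulate $\Delta'$ over one period, then match the total against $g$. The most delicate aspect is verifying that this count is independent of the shifts $\bm{r}, \bm{s}$; a conceptually cleaner route, once the basis framework of the next section is in place, is to couple the bound $\#\Omega \leqslant \ell(G) = 1-g+T$ (obtained from linear independence of the $E_{i,\bm{j},\bm{k}}$ and Riemann--Roch in the stable range $\deg(G)\geqslant 2g-1$) with the easy half of the combinatorial count above, thereby forcing equality.
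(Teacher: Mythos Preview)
Your reduction is sound: the linearization $q^{3}i+m(q+1)|\bm{j}|+mq|\bm{k}|=i-(|\bm{r}|-r_0)-q|\bm{s}|+\Delta(i)$ is correct, the identity $(q-1)(m(q+1)-1)+q(q^{2}-1)(m-1)=2g$ holds so indeed $\Delta\leqslant 2g$, and the trivial count $\#\Omega\geqslant T-2g+1$ follows. The problem is legitimately reduced to the $g$-out-of-$2g$ tail count.

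The gap is that neither of your proposed routes resolves this tail. Route~(a) is a restatement of the difficulty: the shift-independence you flag as ``delicate'' is exactly the content of the lemma, and you give no mechanism for it. Route~(b) does yield the correct \emph{upper} bound $\#\Omega\leqslant\ell(G)=1-g+T$ via linear independence plus Riemann--Roch (this step is not circular, and the paper uses the same inequality in the other direction), but coupling it with your trivial lower bound $T-2g+1$ leaves a deficit of $g$ and cannot force equality. What is missing is a combinatorial \emph{lower} bound $\#\Omega\geqslant 1-g+T$, and Riemann--Roch will not provide it.

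The paper fills this gap by a different reduction rather than a direct tail count. After shifting by $w$ to make all coordinates nonnegative, two peeling lemmas establish $\#\Omega_{\bm{r},\bm{s},t}=\#\Omega_{\bm{0},\bm{s},t}+|\bm{r}|$ and $\#\Omega_{\bm{0},\bm{s},t}=\#\Omega_{\bm{0},\bm{0},t}+q|\bm{s}|$, each proved by exhibiting the explicit complement and checking it has the stated size once $t\geqslant 2g-1$ (a symmetry lemma is used to permute the roles of the coordinates). The base case $\#\Omega_{\bm{0},\bm{0},t}=1-g+t$ is then handled via the substitution $i=a+m(b+(q+1)c)$, which identifies the count with the number of elements $\leqslant t$ in the numerical semigroup $\langle q^{3},\,mq,\,m(q+1)\rangle$; this semigroup is \emph{telescopic}, and the standard genus and Frobenius formulas for telescopic semigroups give exactly $g$ gaps below $2g$. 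In your language, this is the assertion that for $\bm{r}=\bm{s}=\bm{0}$ the map $i\mapsto i+\Delta(i)$ is a bijection from $\mathbb{N}_{0}$ onto that semigroup---the non-obvious combinatorial fact your outline assumes but does not prove, and whose extension to general $\bm{r},\bm{s}$ is what the peeling lemmas supply.
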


 Now we can easily prove the main result of this section.
\begin{theorem}\label{thm:basis1}
	Let $ G:=\sum_{\mu=0}^{q-1} r_{\mu}P_{\mu} + \sum_{\nu=1}^{q^2-1}s_{\nu} Q_{\nu} +tP_{\infty} $. The elements $E_{i,\,\bm{j},\,\bm{k}} $ with $ (i,\bm{j},\,\bm{k}) \in \Omega_{\bm{r},\,\bm{s},\,t}  $ constitute a basis for the Riemann-Roch space
	$ \mathcal{L}(G) $. In particular
	$ \ell(G) = \# \Omega_{\bm{r},\,\bm{s},\,t} $.
\end{theorem}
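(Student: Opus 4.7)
The plan is to establish the theorem in three steps: (i) verify each claimed generator lies in $\mathcal{L}(G)$; (ii) show the generators are linearly independent via valuations at $P_0$; and (iii) match $\#\Omega_{\bm{r},\bm{s},t}$ with $\ell(G)$ using Lemma \ref{thm:omega} together with Riemann--Roch when $t$ is large, then descending from there.

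For step (i), a direct term-by-term comparison of \eqref{eq:E} with the constraints defining $\Omega_{\bm{r},\bm{s},t}$ gives $\Div(E_{i,\bm{j},\bm{k}}) + G \geqslant 0$. For step (ii), the reformulation \eqref{eq:omegarceiljmu} shows that once $i$ is fixed the components $j_\mu$ and $k_\nu$ are forced by the ceiling formulas, so distinct triples in $\Omega_{\bm{r},\bm{s},t}$ have distinct first coordinates $i = v_{P_0}(E_{i,\bm{j},\bm{k}})$, which is read off from \eqref{eq:E}. The standard valuation argument yields linear independence over $\mathbb{F}_{q^{2n}}$ and hence $\#\Omega_{\bm{r},\bm{s},t} \leqslant \ell(G)$.

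For step (iii), the range $t \geqslant 2g - 1 - q^3 w$ is settled first: since $\bm{r}$ has $q$ entries and $\bm{s}$ has $q^2 - 1$ entries, each at least $w$, the bound $|\bm{r}| + q|\bm{s}| \geqslant q^3 w$ gives $\deg G \geqslant 2g-1$, so Riemann--Roch supplies $\ell(G) = 1 - g + \deg G$, which coincides with the count in Lemma \ref{thm:omega}.

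For general $t$, I will induct downward on $t$, comparing $G$ to $G' = G + P_\infty$. The main obstacle is to show $\#\Omega_{\bm{r},\bm{s},t+1} - \#\Omega_{\bm{r},\bm{s},t} \in \{0,1\}$, paralleling the well-known $\ell(G') - \ell(G) \in \{0,1\}$. Since $(\bm{j},\bm{k})$ is a function of $i$, the difference counts the number of $i$ for which the pole order $q^3 i + m(q+1)|\bm{j}(i)| + mq|\bm{k}(i)|$ equals $t+1$; a number-theoretic argument exploiting $m = (q^n+1)/(q+1)$ and the wrap-counts of the ceiling terms shows this pole-order map is injective on the valid range of $i$, so the difference is $0$ or $1$. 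A case analysis then closes the induction: in the former case the known basis of $\mathcal{L}(G')$ already lies in $\mathcal{L}(G)$, so the chain $\#\Omega_{\bm{r},\bm{s},t} \leqslant \ell(G) \leqslant \ell(G') = \#\Omega_{\bm{r},\bm{s},t+1} = \#\Omega_{\bm{r},\bm{s},t}$ forces equality; in the latter case the new $E_{i_0,\bm{j}_0,\bm{k}_0}$ has pole order exactly $t+1$ at $P_\infty$ and therefore cannot lie in $\mathcal{L}(G)$, forcing $\ell(G') - \ell(G) = 1$ and identifying the remaining elements as the required basis of $\mathcal{L}(G)$.
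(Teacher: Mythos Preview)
Your steps (i) and (ii) coincide with the paper's argument. The divergence is in step (iii): the paper does not induct on $t$. Instead it enlarges $r_0$ (keeping $t$, $\bm{s}$, and $r_1,\ldots,r_{q-1}$ fixed) to some $r_0'\gg 0$, invokes Lemma~\ref{thm:omega} together with Riemann--Roch for the enlarged divisor $G'$, and then recovers the basis of $\mathcal{L}(G)$ from that of $\mathcal{L}(G')$ in a single stroke via the filtration
\[
\mathcal{L}(G)=\Big\{f\in\mathcal{L}(G')\ \Big|\ v_{P_0}(f)\geqslant -r_0\Big\}.
\]
This is painless precisely because the $P_0$-valuations $v_{P_0}(E_{i,\bm{j},\bm{k}})=i$ are pairwise distinct by construction (your step (ii)), so selecting those with $i\geqslant -r_0$ picks out exactly $\Omega_{\bm{r},\bm{s},t}\subseteq\Omega_{\bm{r'},\bm{s},t}$.

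Your downward induction on $t$ is a legitimate alternative, and your case analysis is sound once you have $\#\Omega_{\bm{r},\bm{s},t+1}-\#\Omega_{\bm{r},\bm{s},t}\in\{0,1\}$. But that reduction hinges entirely on injectivity of the map $i\mapsto q^3 i+m(q+1)|\bm{j}(i)|+mq|\bm{k}(i)|$, and the phrase ``a number-theoretic argument exploiting $m=(q^n+1)/(q+1)$ and the wrap-counts of the ceiling terms'' is not yet a proof. The claim is true, but establishing it amounts to producing the second parametrization of Corollary~\ref{cor:thetabase2}: the explicit change of variables there is a bijection $\Omega_{\bm{r},\bm{s},t}\leftrightarrow\Theta_{\bm{r},\bm{s},t}$ under which your pole order becomes $-u$, and the ceiling constraints in \eqref{eq:thetaaaa} force $u$ to determine $(\bm{\lambda},\bm{\gamma})$ uniquely, whence injectivity. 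Without that computation (or an equivalent one) the induction cannot close. So your route works, but it is strictly harder than the paper's: by filtering at $P_0$ rather than at $P_\infty$, the paper exploits the parametrization in which distinctness of valuations is free, whereas you must first earn it.
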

\begin{proof}
	Let  $ (i,\bm{j},\,\bm{k}) \in \Omega_{\bm{r},\,\bm{s},\,t}  $. It follows from the definition that $ E_{i,\,\bm{j},\,\bm{k}} \in \mathcal{L}(G) $, where $ G=\sum_{\mu=0}^{q-1} r_{\mu}P_{\mu} + \sum_{\nu=1}^{q^2-1}s_{\nu} Q_{\nu} +tP_{\infty}  $.
	From Equation \eqref{eq:E}, we have
	$ v_{P_0}(E_{i,\,\bm{j},\,\bm{k}}) = i$, which indicates that the valuation of $E_{i,\,\bm{j},\,\bm{k}}$ at the rational place $P_0$ uniquely depends on $i$. Since lattice points in $ \Omega_{\bm{r},\,\bm{s},\,t}  $ provide distinct values of $i$, the elements $ E_{i,\,\bm{j},\,\bm{k}} $
	are linearly independent of each other, with $ (i,\bm{j},\,\bm{k}) \in \Omega_{\bm{r},\,\bm{s},\,t}  $. In order to indicate that they constitute a basis for $ \mathcal{L}(G) $, the only thing left is to prove that 
	\[
	\ell (G)
	= \#\Omega_{\bm{r},\,\bm{s},\,t} .
	\]	
	For the case of $ r_0 $ sufficiently large, it follows from the Riemann-Roch Theorem and Lemma \ref{thm:omega} that
	\begin{align*}
	\ell (G) & = 1-g + \deg(G)\\
	& = 1-g +|{\bm{r}}|+q|{\bm{s} }|+t= \#\Omega_{\bm{r},\,\bm{s},\,t} .
	\end{align*}
	This implies that
	$ \mathcal{L}(G) $
	is spanned by elements $ E_{i,\,\bm{j},\,\bm{k}} $ with $ (i,\bm{j},\,\bm{k})$ in the set $  \Omega_{\bm{r},\,\bm{s},\,t}  $.
	
	For the general case, we choose $ r_0'> r_0 $ large enough and set 
	$ G' :=r_0' P_0+\sum_{\mu=1}^{q-1} r_{\mu}P_{\mu} + \sum_{\nu=1}^{q^2-1}s_{\nu} Q_{\nu} +tP_{\infty}  $ and $ {\bm{r'}}=(r_0',r_1,\cdots,r_{q-1}) $.
	From above argument, we know that the elements $ E_{i,\,\bm{j},\,\bm{k}} $ with $ (i,\bm{j},\,\bm{k}) \in \Omega_{{\bm{r',s}},t} $ span the whole space of	$ \mathcal{L}(G') $.
	Remember that
	$ \mathcal{L}(G) $ is a linear subspace of $ \mathcal{L}(G') $, which can be written as
	\begin{equation*}
	\mathcal{L}(G) = \Big\{ f \in \mathcal{L}(G') \,\,\Big| \,\,v_{P_0}(f)\geqslant -r_0\Big\}.
	\end{equation*}
	Thus, we choose $ f \in \mathcal{L}(G)  $ and suppose that
	\begin{equation*}
	f=\sum_{(i,\bm{j},\,\bm{k}) \in \Omega_{\bm{r'},\,\bm{s},\,t}} a_{i} E_{i,\,\bm{j},\,\bm{k}},
	\end{equation*}
	since $ f \in \mathcal{L}(G') $ by definition. The valuation of $f$ at $ P_0 $ is $ v_{P_0}(f)=\min_{a_i\neq 0} \{ i \}$. Then the inequality $ v_{P_0}(f)\geqslant -r_0 $ gives that, if $ a_i \neq 0 $, then $ i \geqslant -r_0 $. Equivalently, if $ i < -r_0 $, then $ a_i=0 $.  From the definition of $ \Omega_{\bm{r},\,\bm{s},\,t}  $ and $ \Omega_{{\bm{r',s}},t} $, we get that
	\begin{equation*}
	f=\sum_{(i,\bm{j},\,\bm{k}) \in \Omega_{\bm{r},\,\bm{s},\,t}  } a_{i} E_{i,\,\bm{j},\,\bm{k}}.
	\end{equation*}
	Then the theorem follows.
\end{proof}

We now turn to prove Lemma \ref{thm:omega} which requires a series of results including telescopic semigroups listed as follows.

\begin{definition}[{\cite{Kirfel1995}, Definition 6.1}]
	Let $ (a_1,\cdots, a_k) $ be a sequence of positive integers such that the greatest common divisor is $ 1 $. Define $ d_i = \gcd (a_1,\cdots, a_i)  $ and \[ A_i = \{ a_1/d_i,\cdots, a_i/d_i\} \] for $ i =
	1,\cdots, k $. Let $  d_0 = 0 $. Let $ S_i $ be the semigroup generated by $ A_i $. If $ a_i/d_i \in S_{i-1} $ for
	$ i = 2,\cdots, k $, we call the sequence $ (a_1,\cdots, a_k ) $ telescopic. A semigroup is called telescopic if it is generated by a telescopic sequence.	
\end{definition}

\begin{lemma}[{\cite{Kirfel1995}, Lemma 6.4}]\label{lem:teli}
	If $ (a_1,\cdots, a_k) $ is telescopic and $  M \in S_k $, then there exist uniquely determined non-negative integers $ 0 \leqslant x_i <d_{i-1}/d_i $ for $ i = 2,\cdots, k $, such that
	\begin{align*}
	M = \sum_{i=1}^{k} x_i a_i.
	\end{align*}
	We call this representation the normal representation of $ M $.
\end{lemma}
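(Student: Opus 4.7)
The plan is to establish both existence and uniqueness simultaneously by induction on $k$. The base case $k=1$ is essentially trivial: the hypothesis $\gcd(a_1,\ldots,a_k)=1$ forces $a_1=1$, so $S_1=\mathbb{N}_0$ and every $M\in S_1$ equals $x_1\cdot a_1$ for a unique $x_1\in\mathbb{N}_0$ (with no upper-bound constraint, since the index range $i=2,\ldots,k$ is empty). For the inductive step the three facts that do the real work are: (i) $d_k\mid d_{k-1}$, so $e:=d_{k-1}/d_k$ is a positive integer; (ii) $\gcd(a_k/d_k,\,e)=1$, an immediate consequence of $d_k=\gcd(d_{k-1},a_k)$; and (iii) the telescopic hypothesis $a_k/d_k\in S_{k-1}$, rescaled by $d_{k-1}$, gives a reduction identity
\begin{equation*}
e\cdot a_k=\sum_{i=1}^{k-1}y_i\,a_i\qquad\text{for some }y_i\in\mathbb{N}_0.
\end{equation*}

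For existence, given $M\in S_k$, start from any non-negative representation $M=\sum_{i=1}^k z_i a_i$. Whenever $z_k\ge e$, substitute the reduction identity to decrease $z_k$ by $e$ while increasing $z_1,\ldots,z_{k-1}$; iterating finitely often yields $0\le z_k<e$. The remainder $M-z_k a_k$ then lies in $\langle a_1,\ldots,a_{k-1}\rangle=d_{k-1}\cdot S_{k-1}$, so $(M-z_k a_k)/d_{k-1}\in S_{k-1}$. Apply the inductive hypothesis to this element together with the rescaled sequence $(a_1/d_{k-1},\ldots,a_{k-1}/d_{k-1})$: a direct check shows this sequence is again telescopic with $d_i'=d_i/d_{k-1}$ and $A_i'=A_i$, so the inductive bounds $0\le x_i<d_{i-1}'/d_i'=d_{i-1}/d_i$ match those demanded here. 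Multiplying the output through by $d_{k-1}$ and setting $x_k=z_k$ produces the required normal representation.

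For uniqueness, suppose $\sum_{i=1}^k x_i a_i=\sum_{i=1}^k x_i' a_i$ with both expressions in normal form. Reducing modulo $d_{k-1}$ annihilates $a_1,\ldots,a_{k-1}$ and leaves $(x_k-x_k')a_k\equiv 0\pmod{d_{k-1}}$; dividing by $d_k$ and invoking $\gcd(a_k/d_k,e)=1$ gives $x_k\equiv x_k'\pmod{e}$, and the bounds $0\le x_k,x_k'<e$ force $x_k=x_k'$. Subtracting $x_k a_k$ from both sides and dividing by $d_{k-1}$ then reduces the problem to uniqueness in $S_{k-1}$ for the rescaled telescopic sequence, to which the inductive hypothesis applies.

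The step I expect to require the most care is verifying that the rescaled sequence $(a_1/d_{k-1},\ldots,a_{k-1}/d_{k-1})$ inherits the telescopic property with exactly the bound ratios demanded by the statement. This is mostly a gcd bookkeeping check — $\gcd$ commutes with dividing all entries by a common factor, so $d_i/d_{k-1}$ and $a_i/d_i$ both rescale cleanly — but it is essential to confirm it explicitly, since otherwise the inductive existence and uniqueness claims would reference different constraints than those appearing in the conclusion.
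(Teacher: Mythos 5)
Your proof is correct. Note that the paper itself offers no proof of this statement: it is imported verbatim as Lemma 6.4 of the cited reference of Kirfel and Pellikaan, so there is nothing internal to compare against. Your induction on $k$ --- reducing $x_k$ modulo $e=d_{k-1}/d_k$ via the telescopic identity $e\,a_k=\sum_{i<k}y_ia_i$, passing to the rescaled sequence $(a_1/d_{k-1},\ldots,a_{k-1}/d_{k-1})$, and using $\gcd(a_k/d_k,e)=1$ for uniqueness --- is sound and is essentially the standard argument given in that source; the gcd bookkeeping you flag at the end (that $d_i'=d_i/d_{k-1}$ and $A_i'=A_i$, so the inductive bounds coincide with the stated ones) checks out.
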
 

\begin{lemma}[{\cite{Kirfel1995}, Lemma 6.5}]\label{lem:genus} 
	For the semigroup generated by the telescopic sequence $ (a_1,\cdots, a_k) $ we have
	\begin{align*}
	l_g(S_k) &= \sum_{i=1}^{k} (d_{i-1}/d_i - 1)a_i,\\
	g(S_k) & =  (l_g(S_k) + 1)/2,
	\end{align*}
	where $ l_g(S_k) $ and $ g(S_k) $ denote the largest gap and the
	number of gaps of $ S_k $, respectively.  
\end{lemma}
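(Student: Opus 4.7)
The plan is to deduce both formulas from a slightly strengthened version of Lemma \ref{lem:teli}. First I would extend the normal representation to all of $\mathbb{Z}$: for every $M \in \mathbb{Z}$ there exist unique $x_1 \in \mathbb{Z}$ and integers $0 \leqslant x_i < d_{i-1}/d_i$ for $i = 2,\ldots,k$ such that $M = \sum_{i=1}^{k} x_i a_i$. Existence follows by iteratively choosing $x_k$ so that $d_{k-1} \mid M - x_k a_k$, which is possible since the telescopic hypothesis forces $\gcd(a_k/d_k,\, d_{k-1}/d_k) = 1$, and then recursing. Uniqueness follows because the $a_1$ combinations $\sum_{i=2}^{k} x_i a_i$ with $0\leqslant x_i < d_{i-1}/d_i$ hit all $a_1$ residue classes modulo $a_1$ exactly once, so adjusting $x_1 a_1$ pins the representative down. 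With this in hand, $M \in S_k$ iff $x_1 \geqslant 0$.

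Next I compute the Frobenius number. The gaps are exactly the nonnegative integers whose extended normal representation has $x_1 \leqslant -1$. Observe that replacing $x_1$ by $x_1 + 1$ increases $M$ by $a_1 > 0$, so for a largest gap one must take $x_1 = -1$; the rest is maximized by setting $x_i = d_{i-1}/d_i - 1$ for $i \geqslant 2$. This yields
\begin{equation*}
l_g(S_k) = -a_1 + \sum_{i=2}^{k}\left(\frac{d_{i-1}}{d_i} - 1\right) a_i = \sum_{i=1}^{k}\left(\frac{d_{i-1}}{d_i} - 1\right) a_i,
\end{equation*}
using the convention $d_0 = 0$ that makes the first term equal to $-a_1$.

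For the genus, I would exploit the symmetry of telescopic semigroups through the involution $\sigma\colon M \mapsto l_g(S_k) - M$. A direct coordinate computation shows that $\sigma$ acts on extended normal representations by
\begin{equation*}
(x_1, x_2, \ldots, x_k) \;\longmapsto\; \left(-1 - x_1,\; \tfrac{d_1}{d_2} - 1 - x_2,\; \ldots,\; \tfrac{d_{k-1}}{d_k} - 1 - x_k\right),
\end{equation*}
and each new coordinate still lies in the admissible range, so it is again a legitimate extended normal representation. Since $M \in S_k$ iff $x_1 \geqslant 0$ iff $-1 - x_1 \leqslant -1$ iff $\sigma(M) \notin S_k$, the involution $\sigma$ pairs gaps with non-gaps in $\{0, 1, \ldots, l_g(S_k)\}$. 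In particular $l_g(S_k)$ must be odd, and the $l_g(S_k) + 1$ integers in that range split into $(l_g(S_k)+1)/2$ pairs each contributing exactly one gap, giving $g(S_k) = (l_g(S_k) + 1)/2$.

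The main obstacle is the extended uniqueness statement — upgrading Lemma \ref{lem:teli} so that every integer, not merely every element of $S_k$, admits a unique normal representation (with $x_1 \in \mathbb{Z}$ unconstrained on either side). Once this is in place, both the Frobenius-number identification and the symmetry of $\sigma$ become a matter of routine coordinate-level verification.
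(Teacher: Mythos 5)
Your argument is correct, but note that the paper itself offers no proof of this statement: it is imported verbatim as Lemma 6.5 of Kirfel--Pellikaan \cite{Kirfel1995}, so there is no in-paper proof to match. Your route also differs from the one in that reference, which proceeds by induction on $k$ via the recursions $l_g(S_k)=d_{k-1}\,l_g(S_{k-1})+(d_{k-1}-1)a_k$ and $g(S_k)=d_{k-1}\,g(S_{k-1})+(d_{k-1}-1)(a_k-1)/2$, exploiting the gluing structure $S_k \supseteq d_{k-1}S_{k-1}+a_k\mathbb{N}_0$. Your approach instead establishes a unique \emph{extended} normal representation $M=\sum x_ia_i$ with $x_1\in\mathbb{Z}$ and $0\leqslant x_i<d_{i-1}/d_i$ for $i\geqslant 2$ (existence and uniqueness are exactly as you say: $\gcd(a_k,d_{k-1})=d_k$ pins down $x_k$ modulo $d_{k-1}/d_k$, and the $\prod_{i\geqslant 2}d_{i-1}/d_i=a_1$ tuples exhaust the residues mod $a_1$), and then reads off both formulas at once: the Frobenius number by maximizing over $x_1=-1$, and the genus from the coordinatewise involution $M\mapsto l_g(S_k)-M$, which exhibits the symmetry of telescopic semigroups. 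This buys a non-inductive, self-contained proof and the extra fact that $S_k$ is symmetric, which the inductive proof obtains only implicitly. Two small points to make explicit when writing it up: the implication ``$M\in S_k\Rightarrow x_1\geqslant 0$'' is not automatic from the extended representation alone --- it needs Lemma \ref{lem:teli} (telescopicity) to guarantee that an element of $S_k$ has \emph{some} normal representation, which uniqueness then identifies with the extended one; and the convention $d_0=0$ must be invoked, as you do, so that the $i=1$ term of the sum equals $-a_1$.
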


\begin{lemma}\label{lem:PsiR}
	Let $ m=(q^n+1)/(q+1) $, $ g= (q-1)(q^{n+1}+q^n-q^2)/2$ for an odd integer $ n >1 $. Let $ t \in \mathbb{Z} $. Consider the lattice point set $ \Psi(t) $ defined by
	\begin{align*}
	\Big\{ (a,b,c)
	\,\,\Big| \,\,0 \leqslant a <m,\,\, 0 \leqslant b \leqslant q,\,\, c \geqslant 0, \,\,  
	q^3 a+mq b+m(q+1) c  \leqslant t\Big\},
	\end{align*}
	If $ t \geqslant 2g-1 $, then  $ \Psi(t) $ has cardinality 
	\begin{equation*}
	\#\Psi(t)= 1-g + t.
	\end{equation*}
\end{lemma}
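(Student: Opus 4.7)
My plan is to recognize $\Psi(t)$ as (essentially) the normal representations of elements of the Weierstrass semigroup $H(P_\infty)$ at infinity, apply the telescopic-semigroup machinery just stated, and then use the fact that once $t$ exceeds the Frobenius number, counting elements $\le t$ of the semigroup reduces to $t+1-g$.

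\textbf{Step 1: Check the telescopic condition.} Consider the triple $(a_1,a_2,a_3) = (m(q+1),\, mq,\, q^3)$, which are precisely the pole orders at $P_\infty$ of $x-\alpha_\mu$, $y-\beta$, and $z$ given in Proposition \ref{prop:divisor}. I compute $d_1 = m(q+1)$, $d_2 = \gcd(m(q+1), mq) = m$ (since $\gcd(q+1,q)=1$), and $d_3 = \gcd(m, q^3) = 1$ (because $\gcd(m,q)=1$, as $q^n+1\equiv 1\pmod q$). The telescopic condition requires $a_2/d_2 = q \in S_1 = \langle 1\rangle$, clearly true, and $a_3/d_3 = q^3 = q^2\cdot q \in S_2 = \langle q,q+1\rangle$, also true. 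So $(a_1,a_2,a_3)$ is telescopic.

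\textbf{Step 2: Bijection with the semigroup.} By Lemma \ref{lem:teli}, every element $M$ of $S_3$ has a unique normal representation $M = x_1 a_1 + x_2 a_2 + x_3 a_3$ with $x_1\ge 0$, $0\le x_2 < d_1/d_2 = q+1$, and $0\le x_3 < d_2/d_3 = m$. Matching $(x_3,x_2,x_1) \leftrightarrow (a,b,c)$ gives a bijection between the set of triples $(a,b,c)\in \mathbb{Z}^3$ with $0\le a < m$, $0\le b\le q$, $c\ge 0$ and $S_3$, via $(a,b,c)\mapsto q^3 a + mqb + m(q+1)c$. Intersecting with the inequality $q^3 a + mqb + m(q+1)c \le t$ therefore yields a bijection
\[
\Psi(t) \;\longleftrightarrow\; \{\,s \in S_3 \,:\, 0 \le s \le t\,\}.
\]

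\textbf{Step 3: Compute the Frobenius number.} Using Lemma \ref{lem:genus} with $d_0=0$,
\[
l_g(S_3) = -m(q+1) + (q+1-1)\cdot mq + (m-1)\cdot q^3 = m(q^3+q^2-q-1) - q^3.
\]
Substituting $m(q+1) = q^n+1$ gives $m(q^3+q^2-q-1) = (q^n+1)(q^2-1) = q^{n+2}-q^n+q^2-1$, so
\[
l_g(S_3) = q^{n+2}-q^n-q^3+q^2-1 = (q-1)(q^{n+1}+q^n-q^2)-1 = 2g-1,
\]
and hence $g(S_3)=g$ (as $S_3$ is symmetric, consistent with $H(P_\infty)$ for a maximal curve).

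\textbf{Step 4: Conclude.} For $t\ge 2g-1 = l_g(S_3)$, every gap of $S_3$ lies in $[0,t]$, so among the $t+1$ integers in $[0,t]$ exactly $g$ are missing from $S_3$. Combining with Step 2 gives $\#\Psi(t) = t+1-g$, as desired. I expect the only mildly delicate point to be the arithmetic in Step 3, confirming that $l_g(S_3)$ matches $2g-1$; everything else is a direct application of Lemmas \ref{lem:teli} and \ref{lem:genus}.
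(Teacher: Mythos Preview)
Your proof is correct and follows essentially the same route as the paper: both recognize $\Psi(t)$ as the set of normal representations of elements of the telescopic semigroup generated by $q^3$, $mq$, $m(q+1)$, invoke Lemmas~\ref{lem:teli} and~\ref{lem:genus}, and conclude by the gap count once $t\geqslant l_g(S_3)=2g-1$. The only cosmetic difference is the ordering of the telescopic triple---you take $(a_1,a_2,a_3)=(m(q+1),mq,q^3)$ whereas the paper takes $(q^3,mq,m(q+1))$---but both orderings are telescopic and yield the same $l_g(S_3)$ and $g(S_3)$, so this is immaterial.
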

\begin{proof}
	Let $ a_1=q^3$, $a_2=mq $, $a_3=m(q+1) $. It is easily verified that the sequence $ (a_1,a_2,a_3) $ is telescopic. By Lemma \ref{lem:teli} every element $ M $ in $ S_3 $ has a unique representation $ M=a_1 a+a_2 b+a_3 c $, where $ S_3 $ is the semigroup generated by $ (a_1,a_2,a_3) $.
	One obtains from Lemma \ref{lem:genus} that
	\begin{align*}
	l_g(S_3) &= (q-1)(q^n+1)(q+1)-q^3,\\
	g(S_3) & =  \frac{1}{2}(l_g(S_3) + 1) =\frac{1}{2}(q-1)(q^{n+1}+q^n-q^2)=g.
	\end{align*}
	It follows that the set $ \Psi(t) $ has cardinality 
	$ 1-g + t $ provided that $ t \geqslant 2g-1 =   l_g(S_3) $, which finishes the proof.	
\end{proof}
From Lemma \ref{lem:PsiR}, we get the number of lattice points in $ \Omega_{\bm{0},\,\bm{0},\, t} $.
\begin{lemma}\label{lem:omega0000t}
	If $ t\geqslant 2g-1 $, then the cardinality of $ \Omega_{\bm{0},\,\bm{0},\, t} $ is
	\begin{equation*}
	\#\Omega_{\bm{0},\,\bm{0},\,t} = 1-g+t.
	\end{equation*}
\end{lemma}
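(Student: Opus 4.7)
The plan is to reduce this to Lemma \ref{lem:PsiR} by constructing an explicit bijection between $\Omega_{\bm{0},\,\bm{0},\,t}$ and $\Psi(t)$.

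First, I would observe that when $\bm{r}=\bm{0}$ and $\bm{s}=\bm{0}$, the defining conditions in \eqref{eq:omegarceiljmu} force all the components $j_\mu$ to share a common value, namely $j := \lceil -i/(m(q+1))\rceil$, and likewise all $k_\nu$ to equal $k := \lceil -i/m \rceil$. Consequently a point of $\Omega_{\bm{0},\,\bm{0},\,t}$ is determined by its first coordinate $i$, so the set is in bijection with the set of integers $i \geqslant 0$ satisfying $q^3 i + m(q+1)|\bm{j}| + mq|\bm{k}| \leqslant t$, where $|\bm{j}| = (q-1)j$ and $|\bm{k}|=(q^2-1)k$.

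Next, I would introduce the mixed-radix decomposition $i = a + mb + m(q+1)c$ with $0 \leqslant a < m$, $0 \leqslant b \leqslant q$, and $c \geqslant 0$. The nonnegative integer $i$ determines $(a,b,c)$ uniquely, and conversely, so this is a bijection between $\mathbb{N}_0$ and the triples indexing $\Psi(t)$. Since $0 \leqslant a + mb < m(q+1)$, the quotient $\lfloor i/(m(q+1))\rfloor$ equals $c$, giving $j = -c$; and $\lfloor i/m \rfloor = b + (q+1)c$, giving $k = -(b+(q+1)c)$. The remaining step is to substitute these into $q^3 i + m(q+1)|\bm{j}| + mq|\bm{k}|$ and verify, using $q(q^2-1) = q^3 - q$, that the coefficients of $a$, $mb$, and $m(q+1)c$ collapse to $q^3$, $q$, and $1$ respectively. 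Thus the constraint $q^3 i + m(q+1)|\bm{j}| + mq|\bm{k}| \leqslant t$ is exactly $q^3 a + mqb + m(q+1)c \leqslant t$, matching the defining inequality of $\Psi(t)$.

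This identifies $\Omega_{\bm{0},\,\bm{0},\,t}$ with $\Psi(t)$, so $\#\Omega_{\bm{0},\,\bm{0},\,t} = \#\Psi(t) = 1 - g + t$ whenever $t \geqslant 2g-1$, by Lemma \ref{lem:PsiR}.

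The main obstacle is the bookkeeping in the coefficient calculation: ensuring that the large expression $q^3 i - (q-1)m(q+1)c - q(q^2-1)m(b+(q+1)c)$ really simplifies to $q^3 a + mqb + m(q+1)c$. This is routine but the algebraic cancellations (particularly the vanishing of the $m(q+1)c$-term's higher-order contributions) are the one place where error is possible; everything else is formal from the ceiling characterisation and the uniqueness of the mixed-radix representation.
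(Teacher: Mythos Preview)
Your proposal is correct and follows essentially the same approach as the paper: both reduce to Lemma~\ref{lem:PsiR} via the mixed-radix decomposition $i = a + mb + m(q+1)c$, compute $j=-c$ and $k=-(b+(q+1)c)$, and verify that the constraint collapses to $q^3 a + mqb + m(q+1)c \leqslant t$. The paper records the bijection slightly more tersely by writing down the two mutually inverse maps, but the substance is identical.
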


\begin{proof}
	Note that
	\begin{align}\label{eq:omega000t}
	\Omega_{\bm{0},\,\bm{0},\,t} :=  \Big\{& (i,\bm{j},\,\bm{k})
	\,\,\Big| \,\,i  \geqslant 0, \nonumber \\
	& j_{\mu} =  \ceil{\frac{-i }{m(q+1)}  }  \text{ for }   \mu =1,\cdots, q-1,\nonumber \\
	& k_{\nu} =  \ceil{\dfrac{-i }{m}  }  \text{ for }   \nu =1,\cdots, q^2-1,\nonumber \\
	&q^3 i+m(q+1)|{\bm{j}}|+mq|{\bm{k}}|    \leqslant t
	\Big\}. 
	\end{align}
	Set $ i := a+ m(b+(q+1)c)  $ with $ 0\leqslant a < m $, $ 0\leqslant b \leqslant q $ and $ c \geqslant 0 $.  
	Then Equation \eqref{eq:omega000t} gives that   
	\begin{align*}
	\Omega_{\bm{0},\,\bm{0},\, t} \cong  
	\Big\{ (a,b,c)
	\,\,\Big| \,\,0 \leqslant a <m, \,\,0 \leqslant b \leqslant q,\,\, c \geqslant 0,  \,\, 
	q^3 a+mq b+(q^n+1) c  \leqslant t\Big\}.
	\end{align*} 
	Here and thereafter,
	the notation 
	$ A \cong B $ means that two lattice point sets $ A $ and $ B $ are bijective. {In the last formula, the bijection comes from the mappings $ (i,j,k ) \mapsto ( a+ m(b+(q+1)c),-c,-b-(q+1)c)  $ and $ (a,b,c )\mapsto ( i+mk,(q+1)j-k,-j)  $ by denoting $ j=j_1=\cdots=j_{q-1} $ and $ k=k_1=\cdots=k_{q^2-1} $.}
	 Thus the assertion $ \#\Omega_{\bm{0},\,\bm{0},\,t} = 1-g+t $ is derived from Lemma \ref{lem:PsiR}. 
\end{proof}

The next three lemmas state some elementary properties of $ \Omega_{\bm{r},\,\bm{s},\,t}  $.
\begin{lemma}\label{lem:omeganoorder}
	The lattice point set $ \Omega_{\bm{r},\,\bm{s},\,t}  $ as defined above is symmetric with respect to $ r_0,r_1,\cdots,r_{q-1}  $ and $ s_1,s_2,\cdots,s_{q^2-1}  $, respectively. In other words, we have $ \# \Omega_{\bm{r},\,\bm{s},\,t}  = \# \Omega_{\bm{r'},\,\bm{s'},\,t}  $ by denoting $ \bm{r'}=(r_0',r_1',\cdots,r_{q-1}')  $ and $ \bm{s'}=(s_1',s_2',\cdots,s_{q^2-1}')  $, where the sequences  $ \big(r_i\big)_{i=0}^{q-1}   $ and $ \big(s_i\big)_{i=1}^{q^2-1}   $ are equal to $ \big(r_i'\big)_{i=0}^{q-1}    $ and $ \big(s_i'\big)_{i=1}^{q^2-1}   $ up to permutation, respectively.
\end{lemma}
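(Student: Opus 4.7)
My plan is to reduce the claim to invariance of $\#\Omega_{\bm{r},\bm{s},t}$ under individual transpositions and then to handle the only nontrivial case by decomposing according to the residue class of $i$ modulo $M := m(q+1)$. Transpositions $s_\nu \leftrightarrow s_{\nu'}$ with $\nu,\nu'\geq 1$ and $r_\mu \leftrightarrow r_{\mu'}$ with $\mu,\mu'\geq 1$ are immediate: swapping the corresponding coordinates of $\bm{k}$ (respectively $\bm{j}$) gives an explicit bijection on $\Omega_{\bm{r},\bm{s},t}$, since both the ceiling equalities in \eqref{eq:omegarceiljmu} and the symmetric sums $|\bm{j}|$, $|\bm{k}|$ are preserved. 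Thus, by composition, the problem reduces to invariance under the single transposition $r_0 \leftrightarrow r_1$, which is nontrivial because the constraint $i+r_0 \geq 0$ and the ceiling formula for $j_1$ play asymmetric roles.

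For this remaining case, I would partition $\Omega_{\bm{r},\bm{s},t}$ by $\rho := i \bmod M$. Writing $i = \rho + Ms$ with $\rho \in \{0,1,\ldots,M-1\}$ and $s \in \mathbb{Z}$, the formulas of \eqref{eq:omegarceiljmu} become
\begin{align*}
j_\mu = \ceil{\frac{-\rho-r_\mu}{M}} - s, \qquad k_\nu = \ceil{\frac{-\rho-s_\nu}{m}} - (q+1)s.
\end{align*}
Substituting into $q^3 i + M|\bm{j}| + mq|\bm{k}|$, the coefficient of $s$ becomes $q^3 M - (q-1)M - mq(q^2-1)(q+1)$, which equals $M$ by a routine simplification that boils down to the defining relation $M = m(q+1)$. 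Consequently, for each $\rho$ the valid $s$'s form an integer interval with lower endpoint $\ceil{(-\rho-r_0)/M}$ and upper endpoint $\floor{(t - q^3\rho - MT(\rho) - mq U(\rho))/M}$, where $T(\rho) := \sum_{\mu=1}^{q-1}\ceil{(-\rho-r_\mu)/M}$ and $U(\rho) := \sum_{\nu=1}^{q^2-1}\ceil{(-\rho-s_\nu)/m}$.

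Under the swap $r_0 \leftrightarrow r_1$, set $\delta := \ceil{(-\rho-r_1)/M} - \ceil{(-\rho-r_0)/M}$. The new lower endpoint is shifted by $\delta$, while $T(\rho)$ is shifted by $-\delta$, so the new upper endpoint is also shifted by $\delta$ (using $\floor{X/M + \delta} = \floor{X/M} + \delta$ for integer $\delta$). The length of each interval is therefore preserved, and summation over $\rho$ gives $\#\Omega_{\bm{r},\bm{s},t} = \#\Omega_{\bm{r}',\bm{s},t}$. The main subtlety is the coefficient identity that forces the $s$-term to collapse to exactly $M$---geometrically, multiplication by $x$ realizes the shift $(i,\bm{j},\bm{k}) \mapsto (i+M,\,\bm{j}-(1,\ldots,1),\,\bm{k}-(q+1)(1,\ldots,1))$ and raises the pole order at $P_\infty$ by exactly $M$---along with the minor check that an empty interval contributes zero on both sides.
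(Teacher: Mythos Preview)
Your argument is correct and rests on the same change of variables as the paper's proof: both write $i = \rho + Ms$ with $M = m(q+1)$ and $0\leqslant \rho < M$, and both hinge on the identity that the coefficient of $s$ in $q^3 i + M|\bm{j}| + mq|\bm{k}|$ collapses to exactly $M$. The difference is purely in packaging. You fix the transposition $r_0\leftrightarrow r_1$, compute the length of the $s$-interval for each residue class $\rho$, and observe that both endpoints shift by the same integer $\delta$. The paper instead pushes one step further: after the substitution it writes $s = j_0 + \iota$ with $j_0 := \ceil{(-\rho - r_0)/M}$ and $\iota\geqslant 0$, so that the resulting set is parametrized by $(\rho,\iota,j_0,\bm{j},\bm{k})$ with the single constraint
\[
q^3\rho + M\bigl(j_0 + \iota + |\bm{j}|\bigr) + mq|\bm{k}| \leqslant t,\qquad j_\mu = \ceil{\frac{-\rho - r_\mu}{M}}\ \text{for }\mu=0,1,\ldots,q-1.
\]
This form is \emph{manifestly} symmetric in $r_0,\ldots,r_{q-1}$ simultaneously, so all permutations are handled at once without singling out a transposition. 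Your route is a bit more elementary and makes the ``interval-shift'' mechanism explicit; the paper's buys a one-shot symmetry and avoids the separate bookkeeping for empty intervals.
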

\begin{proof}
	Recall that $ \Omega_{\bm{r},\,\bm{s},\,t}  $  is defined by
	\begin{align*}
	\Omega_{\bm{r},\,\bm{s},\,t}  = \Big\{& (i',\bm{j'},\bm{k'})
	\,\,\Big| \,\,i'+r_0 \geqslant 0, \nonumber \\
	& j_{\mu}' =  \ceil{\frac{-i'-r_{\mu}}{m(q+1)}  }  \text{ for }   \mu =1,\cdots, q-1,\nonumber \\
	& k_{\nu}' =  \ceil{\frac{-i'-s_{\nu} }{m}  }  \text{ for }   \nu =1,\cdots, q^2-1,\nonumber \\
	&q^3 i'+m(q+1)|{\bm{j'}}|+mq|{\bm{k'}}|    \leqslant t
	\Big\},
	\end{align*}
	where $ {\bm{j'}}=(j_1',\cdots,j_{q-1}') $ and $ {\bm{k'}}=(k_1',\cdots,k_{q^2-1}') $. 
	It is important to write $ i'=i+m(q+1)l $ with $ 0 \leqslant i < m(q+1) $. Let $  j_{\mu}' =  j_{\mu}-l $ for $  \mu \geqslant 1 $, $  k_{\nu}' =  k_{\nu}-(q+1)l $ for $  \nu \geqslant 1 $. Then
	\begin{align*}
	\Omega_{\bm{r},\,\bm{s},\,t}  \cong\Big\{& (i,l,\bm{j},\,\bm{k})
	\,\,\Big| \,\,i+m(q+1)l \geqslant -r_0, \,\,
	0 \leqslant i <  m(q+1), \nonumber \\
	& j_{\mu} =  \ceil{ \frac{-i-r_{\mu}}{m(q+1)} } \text{ for }  \mu =1,\cdots, q-1 ,\nonumber \\
	& k_{\nu} =  \ceil{\frac{-i-s_{\nu} }{m}  }  \text{ for }   \nu =1,\cdots, q^2-1,\nonumber \\
	&q^3 i+m(q+1)\big(l+|{\bm{j}}|\big)+mq|{\bm{k}}|    \leqslant t
	\Big\},
	\end{align*} 
	where $ {\bm{j}}=(j_1,\cdots,j_{q-1}) $ and $ {\bm{k}}=(k_1,\cdots,k_{q^2-1}) $.
	The first inequality in $ \Omega_{\bm{r},\,\bm{s},\,t}  $ gives that  $ l\geqslant j_{0}: = \ceil{ \dfrac{-i-r_{0}}{m(q+1)} }  $. So we write $ l = j_{0} +\iota $ with $ \iota \geqslant 0 $.
	Then
	\begin{align*}
	\Omega_{\bm{r},\,\bm{s},\,t} \cong \Big\{& (i,\iota, j_0,\bm{j},\,\bm{k})
	\,\,\Big| \,\,
	0 \leqslant i < m(q+1), \,\,\iota \geqslant 0,\\
	& j_{\mu} =  \ceil{ \frac{-i-r_{\mu}}{m(q+1)} } \text{ for }  \mu =0,1,\cdots, q-1 ,\nonumber \\
	& k_{\nu} =  \ceil{\frac{-i-s_{\nu} }{m}  }  \text{ for }   \nu =1,\cdots, q^2-1,\nonumber \\
	&q^3 i+m(q+1)\big(j_0+\iota+|{\bm{j}}|\big)+mq|{\bm{k}}|    \leqslant t
	\Big\}.
	\end{align*}
	The right hand side means that the number of the lattice points does not depend on the order of $r_{\mu} $, $ 0 \leqslant \mu \leqslant q-1 $, and the order of $s_{\nu}$, $ 1 \leqslant \nu \leqslant q^2-1 $, which concludes the desired assertion.		
\end{proof}

\begin{lemma}\label{lem:omegas2=0}
	Let $ {\bm{r}}=(r_0,r_1,\cdots,r_{q-1}) \in \mathbb{N}_0^q$ and $ {\bm{s}}=(s_1,s_2,\cdots,s_{q^2-1})\in \mathbb{N}_0^{q^2-1} $. If $ t \geqslant 2g-1 $, then
	\[
	\# \Omega_{\bm{r},\,\bm{s} ,\, t} = \# \Omega_{\bm{0},\,\bm{s},\,t} + |{\bm{r}}| .
	\]
\end{lemma}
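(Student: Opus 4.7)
The plan is to proceed by induction on $|\bm{r}|$. The base case $|\bm{r}|=0$, i.e.\ $\bm{r}=\bm{0}$, is a tautology. By the symmetry in the $r_\mu$'s of Lemma~\ref{lem:omeganoorder}, for the inductive step it suffices to prove
\[
\#\Omega_{\bm{r}+\bm{e}_0,\,\bm{s},\,t}=\#\Omega_{\bm{r},\,\bm{s},\,t}+1
\qquad\text{where } \bm{e}_0=(1,0,\ldots,0).
\]

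Inspecting the reformulation \eqref{eq:omegarceiljmu}, the parameter $r_0$ enters the description of $\Omega$ only through the lone inequality $i+r_0\geqslant 0$. Thus passing from $\bm{r}$ to $\bm{r}+\bm{e}_0$ can enlarge the set by at most the unique tuple $(i,\bm{j},\bm{k})$ with $i=-r_0-1$ and $j_\mu,k_\nu$ given by the ceiling formulas. Hence the whole inductive step boils down to verifying that this new tuple obeys the last defining inequality
\[
q^3 i + m(q+1)|\bm{j}|+mq|\bm{k}|\leqslant t.
\]

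To produce a sharp enough bound on the left-hand side at $i=-r_0-1$, I will write each ceiling in division form,
\begin{align*}
m(q+1)j_\mu &=(r_0+1-r_\mu)+\delta_\mu, & 0 &\leqslant \delta_\mu\leqslant m(q+1)-1,\\
m\,k_\nu &=(r_0+1-s_\nu)+\epsilon_\nu, & 0 &\leqslant \epsilon_\nu\leqslant m-1,
\end{align*}
and sum over $\mu=1,\ldots,q-1$ and $\nu=1,\ldots,q^2-1$. The key observation is the numerical identity $-q^3+(q-1)+q(q^2-1)=-1$, which makes the coefficient of $r_0$ collapse and leaves
\[
q^3(-r_0-1)+m(q+1)|\bm{j}|+mq|\bm{k}|=-1-|\bm{r}|-q|\bm{s}|+\sum_{\mu}\delta_\mu+q\sum_{\nu}\epsilon_\nu.
\]
Replacing the remainders by their upper bounds and using the companion identity $m(q^3+q^2-q-1)=(q^n+1)(q^2-1)=2g-1+q^3$ (which follows from $m(q+1)=q^n+1$ together with the genus formula $2g=(q-1)(q^{n+1}+q^n-q^2)$) bounds this expression above by $2g-1-|\bm{r}|-q|\bm{s}|\leqslant 2g-1\leqslant t$, confirming the desired membership and closing the induction.

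The hard part is spotting the algebraic cancellation that eliminates the $r_0$-dependence, and then lining it up with the precise genus identity so that the final estimate lands exactly at $2g-1$. A minor but essential preliminary is the use of Lemma~\ref{lem:omeganoorder} to reduce bumping an arbitrary coordinate of $\bm{r}$ to bumping $r_0$, so the bookkeeping only has to track a single direction in the lattice.
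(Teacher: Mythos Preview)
Your proof is correct and follows essentially the same strategy as the paper: both show that enlarging $r_0$ adds exactly the expected number of lattice points because the final inequality $q^3 i+m(q+1)|\bm{j}|+mq|\bm{k}|\leqslant t$ is automatically satisfied for the newly admitted values of $i$, with the bound landing precisely at $2g-1$, and both then invoke Lemma~\ref{lem:omeganoorder} to propagate to the other coordinates of $\bm{r}$. The only difference is computational: the paper drops $r_0$ to $0$ in one block, parametrises the complement via $i=a+m(b+(q+1)c)$ with $c\leqslant -1$, and uses the crude bounds $j_\mu\leqslant -c$, $k_\nu\leqslant -b-(q+1)c$; you instead increment $r_0$ by one and estimate the ceilings directly through their division remainders, with your coefficient identity $-q^3+(q-1)+q(q^2-1)=-1$ playing the role of the paper's change of variables. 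Both computations collapse to the same genus identity $(q^n+1)(q^2-1)=2g-1+q^3$.
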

\begin{proof} 
	Let us take the sets $ \Omega_{\bm{r},\,\bm{s},\,t}  $ and $ \Omega_{\bm{r'},\,\bm{s},\,t} $ for consideration, where $ {\bm{r'}}=(0,r_1,\cdots,r_{q-1}) $. It follows from the definition that the complement set   $\Delta:=\Omega_{\bm{r},\,\bm{s},\,t} \backslash \Omega_{\bm{r'},\,\bm{s},\,t} $ is given by
	\begin{align*}
	\Big\{& (i,\bm{j},\,\bm{k})
	\,\,\Big| \,\, -r_0 \leqslant i < 0, \\
	& j_{\mu} =  \ceil{ \frac{-i-r_{\mu}}{m(q+1)} } \text{ for }  \mu =1,\cdots, q-1 ,\nonumber \\
	& k_{\nu} =  \ceil{\frac{-i-s_{\nu} }{m}  }  \text{ for }   \nu =1,\cdots, q^2-1,\nonumber \\
	&q^3 i+m(q+1)|{\bm{j}}|+mq|{\bm{k}}|  \leqslant t
	\Big\}.
	\end{align*}
	{Clearly} $ \Delta=\varnothing $ if $ r_0=0 $. To determine the cardinality of $ \Delta $ with $ r_0>0 $, we denote $ i := a+ m(b+(q+1)c)  $ with integers $ a,b,c $ satisfying $0\leqslant  a <m $, $ 0\leqslant b \leqslant q $ and $ c \leqslant -1 $.  
	Then $ j_{\mu} \leqslant -c $ for $ \mu \geqslant 1 $,
	$ k_{\nu} \leqslant -b-(q+1)c  $ for $ \nu \geqslant 1 $. A straightforward computation shows
	\begin{align*} 
	q^3 i+m(q+1)|{\bm{j}}|+mq|{\bm{k}}|
	&  \leqslant 
	q^3 a+ mqb +m(q+1)c \\
	& \leqslant q^3 (m-1)+ mq^2 -m(q+1)=2g-1.
	\end{align*}
	So the last inequality in $ \Delta $ always	
	holds for all $ t\geqslant 2g-1 $, which means that the cardinality of $ \Delta $ is determined by the first inequality, that is $ \#\Delta = r_0 $. Then we must have
	\begin{align*}
	\# \Omega_{\bm{r},\,\bm{s},\,t} =\# \Omega_{\bm{r'},\,\bm{s},\,t}   +r_0,
	\end{align*}
	whenever $ r_0 \geqslant 0 $. 
	Repeating the above {argument} and using Lemma \ref{lem:omeganoorder}, we get
	\begin{align*}
	\# \Omega_{\bm{r},\,\bm{s},\,t}= \# \Omega_{\bm{0},\,\bm{s},\,t} + |{\bm{r}}| , 
	\end{align*}
	where $ {\bm{r}}=(r_0,r_1,\cdots,r_{q-1}) $.	
\end{proof}

\begin{lemma}\label{lem:omegasr=0}
	Let  $ {\bm{s}}=(s_1,s_2,\cdots,s_{q^2-1})\in \mathbb{N}_0^{q^2-1} $. If $ t \geqslant 2g-1 $, then the following identity holds:
	\[
	\# \Omega_{\bm{0},\,\bm{s},\, t} =   \# \Omega_{\bm{0},\,\bm{0},\, t}+q |{\bm{s}}| .
	\]
\end{lemma}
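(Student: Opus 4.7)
My plan is to generalize the parametrization used in the proof of Lemma \ref{lem:omega0000t}. Every $i\geqslant 0$ has a unique expression $i = a + m(b + (q+1)c)$ with $0\leqslant a < m$, $0\leqslant b \leqslant q$ and $c \geqslant 0$. Since $\bm{r} = \bm{0}$, the ceilings simplify to $j_\mu = -c$ and $k_\nu = \ceil{(-i-s_\nu)/m} = -b - (q+1)c - \tau_\nu(a)$, where $\tau_\nu(a) := \lfloor (a+s_\nu)/m\rfloor$. Setting $T(a) := \sum_{\nu=1}^{q^2-1}\tau_\nu(a)$, an expansion formally identical to the one in Lemma \ref{lem:omega0000t} (the shifted $k_\nu$'s contributing an extra $-mqT(a)$ term) will produce the bijection
\[
\Omega_{\bm{0},\bm{s},t} \cong \Big\{(a,b,c) \,\Big|\, 0 \leqslant a < m,\ 0 \leqslant b \leqslant q,\ c \geqslant 0,\ q^3 a + mqb + m(q+1)c \leqslant t + mq T(a) \Big\}.
\]
When $\bm{s} = \bm{0}$ this specializes to $\Psi(t) \cong \Omega_{\bm{0},\bm{0},t}$, and because $T(a) \geqslant 0$, the latter embeds into the former under the same parametrization.

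Slicing the complement by the value of $a$, I will count, for each fixed $a$, the pairs $(b,c)$ with $0 \leqslant b \leqslant q$, $c \geqslant 0$ and $Y_a < mqb + m(q+1)c \leqslant Y_a + mqT(a)$, where $Y_a := t - q^3 a$. As $\gcd(q,q+1)=1$, each element of the numerical semigroup $\langle q,q+1\rangle$ admits a unique representation $qb + (q+1)c$ with $b \in [0,q]$ and $c \geqslant 0$, so this count equals the number of semigroup elements lying in $(Y_a/m,\,Y_a/m + qT(a)]$. Such a half-open interval of length $qT(a)$ contains exactly $qT(a)$ integers; the decisive technical point is that all of them must lie past the Frobenius number $F = q(q+1)-q-(q+1) = q^2-q-1$ of $\langle q,q+1\rangle$, equivalently $Y_a \geqslant m(q^2-q-1)$. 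The worst case $a = m-1$ demands $t \geqslant q^3(m-1) + m(q^2-q-1)$; expanding the right side as $(q-1)(q+1)^2 m - q^3 = (q^2-1)(q^n+1) - q^3$ and comparing with $2g-1 = (q-1)(q^{n+1}+q^n-q^2)-1$ shows these coincide \emph{exactly}. This tight matching between the Frobenius threshold and the hypothesis $t\geqslant 2g-1$ is what I expect to be the main obstacle and will need to be verified with care.

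To finish, the per-$a$ count $qT(a)$ must be summed. Using the elementary identity $\sum_{a=0}^{m-1}\lfloor (a+s)/m\rfloor = s$ for $s \in \mathbb{N}_0$ (immediate by writing $s = qm + u$ with $0 \leqslant u < m$ and noting that exactly $u$ of the $m$ floors bump up), I obtain $\sum_a T(a) = \sum_\nu s_\nu = |\bm{s}|$, whence $\#\Omega_{\bm{0},\bm{s},t} - \#\Omega_{\bm{0},\bm{0},t} = q|\bm{s}|$, as desired.
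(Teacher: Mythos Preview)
Your argument is correct and takes a genuinely different route from the paper's. The paper proceeds by a symmetry reduction: writing $\Gamma_{s_0,(s_1,\ldots,s_{q^2-1}),t}$ for $\Omega_{\bm{r},\bm{s},t}$ with $\bm{r}=(s_0,\ldots,s_0)$, it performs a change of variables ($i'=i+m\kappa$, $\kappa=k_0+\varepsilon$, $\varepsilon=-(q+1)j_0+\eta$) to show that $\#\Gamma$ is invariant under permutations of the full tuple $(s_0,s_1,\ldots,s_{q^2-1})$; it then swaps $s_0=0$ with $s_1$, invokes Lemma~\ref{lem:omegas2=0} to peel off $qs_1$, and iterates. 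Your approach bypasses both Lemma~\ref{lem:omeganoorder} and Lemma~\ref{lem:omegas2=0} entirely: you extend the $(a,b,c)$-parametrization of Lemma~\ref{lem:omega0000t} directly, recognize the fixed-$a$ slices as normal forms in the numerical semigroup $\langle q,q+1\rangle$, and count the excess via its Frobenius number. This is more self-contained and has the attractive feature of \emph{explaining} the threshold $t\geqslant 2g-1$: it is exactly the condition that every slice interval sits beyond the Frobenius number $q^2-q-1$. The paper's route, on the other hand, isolates a reusable structural symmetry among $s_0,s_1,\ldots,s_{q^2-1}$ that your direct count does not surface. One small remark: in your closing identity you write ``$s=qm+u$'', where the quotient letter collides with the paper's prime power $q$; use a neutral symbol there.
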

\begin{proof}
	For convenience, let us denote $ {\bm{r}}:=(s_0,s_0,\cdots,s_0) $ to be the $ q $-tuple with all entries equal $ s_0 $, where $  s_0 \geqslant 0  $, and write $ \Omega_{\bm{r},\,\bm{s},\,t}  $ as $ \Gamma_{s_0,(s_1,\cdots,\,s_{q^2-1}),\,t} $. 
	To get the desired conclusion, we first claim that 
	\begin{align}\label{eq:theta}
	\# \Gamma_{s_0,(s_1,\cdots,s_{q^2-1}),\,t}=\#\Gamma_{s_0',(s_1',\cdots,s_{q^2-1}'),\,t},
	\end{align}
	where the sequence  $ \big(s_i\big)_{i=0}^{q^2-1}  $ is equal to $ \big(s_i'\big)_{i=0}^{q^2-1} $ up to permutation.
	
	Note that $ \Gamma_{s_0,(s_1,\cdots,s_{q^2-1}),\,t} $ is equivalent to 
	\begin{align*}
	\Big\{& (i',j',k_1',\cdots,k_{q^2-1}')
	\,\,\Big| \,\, i'+s_0 \geqslant 0, \\
	& 0 \leqslant i' + m(q+1)j'+s_0  < m(q+1),\\
	& 0 \leqslant i'+m k_{\nu}'+s_{\nu} <m \text{ for }  \nu =1,\cdots, q^2-1,\\
	&q^3 i'+m(q+1)(q-1)j'+mq|{\bm{k}'}|  \leqslant t
	\Big\}, 
	\end{align*}
	where $ |{\bm{k}'}|=\sum_{\nu=1}^{q^2-1} k_{\nu}'  $. By setting $ i':=i+m\kappa $ where $ 0\leqslant i  <m $ and $ k_{\nu}':=k_{\nu} -\kappa $ for $ \nu \geqslant 1 $, we obtain
	\begin{align*} \Gamma_{s_0,(s_1,\cdots,s_{q^2-1}),\,t} \cong 
	\Big\{& (i,\kappa,j',k_1,\cdots,k_{q^2-1})
	\,\,\Big| \,\,0\leqslant i  <m,\,\, i+m\kappa+s_0 \geqslant 0,  \\
	& 0 \leqslant i+m\kappa + m(q+1)j'+s_0  < m(q+1),\\
	& 0 \leqslant i+m k_{\nu}+s_{\nu} <m \text{ for }  \nu =1,\cdots, q^2-1,\\
	&q^3 i+m(q+1)(q-1)j'+mq(\kappa+|{\bm{k}}|)  \leqslant t
	\Big\}. 
	\end{align*}
	Put $ \kappa :=k_0+\varepsilon $ where $ k_0 =\ceil{\dfrac{-i-s_0}{m}} $ and $ \varepsilon =-(q+1)j_0+\eta $ with $ 0\leqslant \eta < q+1 $. One gets that $ 0 \leqslant i+m k_{0}+s_{0} <m $, which leads to $ 0 \leqslant i+m \kappa +m(q+1)j_0-m\eta+s_{0} <m $. So the inequality $ 0 \leqslant i+m \kappa +m(q+1)j_0+s_{0} <m+m\eta \leqslant m(q+1) $ holds because $ \varepsilon =-(q+1)j_0+\eta $. Thus we must have $ j_0=j'\leqslant 0 $. Therefore
	\begin{align*} \Gamma_{s_0,(s_1,\cdots,s_{q^2-1}),\,t} \cong 
	\Big\{& (i, j_0,\eta,k_0, k_1,\cdots,k_{q^2-1})
	\,\,\Big| \,\,0\leqslant i  <m,\,\, j_0 \leqslant 0,  \\
	& 0\leqslant \eta < q+1,\\
	&  k_{\nu}=\ceil{\dfrac{-i-s_{\nu}}{m}} 
	\text{ for }  \nu =0,1,\cdots, q^2-1,\\
	&q^3 i- m(q+1)j_0+mq(k_0+\eta+|{\bm{k}}|)  \leqslant t
	\Big\}. 
	\end{align*}
	The right hand side means that the lattice points do not depend on the order of $s_{\nu} $ with $ 0 \leqslant \nu \leqslant q^2-1 $, by observing that $   k_{\nu} $ is determined by $ s_{\nu} $.  In other words, we have shown that the number of lattice points in $ \Gamma_{s_0,(s_1,\cdots,s_{q^2-1}),\,t} $ does not depend on the order of $ s_{\nu} $ with $ 0\leqslant \nu \leqslant q^2-1 $, concluding 
	the claim we presented by \eqref{eq:theta}. So it follows from \eqref{eq:theta} and Lemma \ref{lem:omegas2=0} that	
	\begin{align*}
	\# \Gamma_{0,\,(s_1,\,s_2,\,\cdots,\,s_{q^2-1}),\,t}
	&=\#\Gamma_{s_1,\,(0,\,s_2,\,\cdots,\,s_{q^2-1}),\,t}\\
	&=\#\Gamma_{0,\,(0,\,s_2,\,\cdots,\,s_{q^2-1}),\,t} +	qs_1.
	\end{align*}
	By repeatedly using Lemma \ref{lem:omegas2=0}, we get
	\begin{align*}
	\# \Gamma_{0,\,(s_1,\,s_2,\,\cdots,\,s_{q^2-1}),\,t}
	=\#\Gamma_{0,\,(0,\,0,\,\cdots,0),\,t} +	q (s_1+s_2+\cdots+s_{q^2-1}),
	\end{align*}
	concluding the desired formula $ \# \Omega_{\bm{0},\,\bm{s},\, t} =   \# \Omega_{\bm{0},\,\bm{0},\, t}+q |{\bm{s}}| $.	
\end{proof}

We are now in a position to give the proof of Lemma \ref{thm:omega}.
\begin{proof}{[\bf{Proof of Lemma \ref{thm:omega}}]}
	By taking 
	$ w:=\min_{0 \leqslant \mu \leqslant q-1 \atop 1 \leqslant \nu \leqslant q^2-1 }\Big\{r_{\mu},s_{\nu} \Big\} $, we obtain from the definition that
	$ \Omega_{\bm{r},\,\bm{s},\,t}  $ is equivalent to $ \Omega_{\bm{r'},\,\bm{s'},\,t'} $, where $ {\bm{r'}}=(r_0-w,\cdots,r_{q-1}-w) $, $ {\bm{s'}}= (s_1-w,\cdots,s_{q^2-1}-w)$ and $ t'=t+q^3 w $. Hence $ \# \Omega_{\bm{r},\,\bm{s},\,t} =\# \Omega_{\bm{r'},\,\bm{s'},\,\,t'} $.  
	On the other hand, by observing that $ r_{\mu}-w \geqslant 0 $, $ s_{\nu}-w \geqslant 0 $ and $ t'\geqslant 2g-1 $, we establish from Lemmas \ref{lem:omega0000t}, \ref{lem:omegas2=0} and \ref{lem:omegasr=0} that
	\begin{align*}
	\# \Omega_{\bm{r'},\,\bm{s'},\,t'}
	&=\# \Omega_{\bm{0},\,\bm{s'},\,t'} + |{\bm{r'}}| \\
	&=\# \Omega_{\bm{0},\,\bm{0},\,t'} + q  |{\bm{s'}}| + |{\bm{r'}}| \\
	&=1-g+t'+ q  |{\bm{s'}}| + |{\bm{r'}}|\\
	&=1-g+t+ q  |{\bm{s}}| + |{\bm{r}}| .	
	\end{align*}
	It then follows that 
	\begin{align*} 
	\# \Omega_{\bm{r},\,\bm{s},\,t} =1-g+t+ q  |{\bm{s}}| + |{\bm{r}}|, 	 
	\end{align*} completing the proof of Lemma \ref{thm:omega}.	
\end{proof}

We finish this section with a result that allows us to give a new form of the basis for our Riemann-Roch space $ \mathcal{L}(G) $ with $ G=\sum_{\mu=0}^{q-1} r_{\mu}P_{\mu} + \sum_{\nu=1}^{q^2-1}s_{\nu} Q_{\nu} +tP_{\infty} $. Denote $ {\bm{\lambda}}:=(\lambda_1,\cdots,\lambda_{q-1}) $ and $ {\bm{\gamma}}:=(\gamma_1,\cdots,\gamma_{q^2-1})  $. For $ (u,\bm{\lambda},\bm{\gamma}) \in \mathbb{Z}^{q^2+q-1} $, we define 
\begin{equation*}
\Lambda_{u,\bm{\lambda},\bm{\gamma}}:=\tau ^u \prod_{\mu=1}^{q-1} f_{\mu}^{\lambda_{\mu}} \prod_{\mu=1}^{q^2-1} h_{\nu}^{\gamma_{\nu}} ,
\end{equation*}
where $ \tau:=\dfrac{z^{q^{n-3}}}{x-\alpha_0 }  $, $ f_{\mu}: = \dfrac{x-\alpha_{\mu}}{x-\alpha_0} $ for $ \mu \geqslant 1 $, and $  h_{\nu}:=  \dfrac{y-\beta_{\nu}}{y-\beta_0} $ for $ \nu \geqslant 1 $. {We have from Proposition \ref{prop:divisor} that}
\begin{align*}
\Div(\Lambda_{u,\bm{\lambda},\bm{\gamma}} )=
& \sum_{\mu=1}^{q-1}\Big(q^{n-3}u+m(q+1) \lambda_{\mu}-m |{\bm{\gamma}}|\Big) P_{\mu} 
+\sum_{\nu=1}^{q^2-1}(q^{n-3}u+m \gamma_{\nu})Q_{\nu} \nonumber \\
& -\Big((m(q+1)-q^{n-3}) u+m(q+1)|{\bm{\lambda}}| + m |{\bm{\gamma}}|\Big)P_0+u P_{\infty} .
\end{align*}

There is a close relationship between the elements
$ \Lambda_{u,\bm{\lambda},\bm{\gamma} } $ and $ E_{i,\bm{j},\,\bm{k} } $ explored as follows.
\begin{corollary}\label{cor:thetabase2}
	Let  $ G:=\sum_{\mu=0}^{q-1} r_{\mu}P_{\mu} + \sum_{\nu=1}^{q^2-1}s_{\nu} Q_{\nu} +tP_{\infty} $. Then the elements $ \Lambda_{u,\bm{\lambda},\bm{\gamma}} $ with $ (u,\bm{\lambda},\bm{\gamma})\in \Theta_{\bm{r},\,\bm{s},\,t} $ form a basis for the Riemann-Roch space
	$ \mathcal{L}(G) $, where the set $ \Theta_{\bm{r},\,\bm{s},\,t} $ is given by
	\begin{align}\label{eq:thetaaaa}
	\Big\{& (u,\bm{\lambda},\bm{\gamma}) \,\,\Big| \,\,  u   \geqslant - t,\nonumber \\
	& 0 \leqslant q^{n-3}u+m \gamma_{\nu}+s_{\nu} <m \textup{ for }  \nu =1,\cdots, q^2-1 \nonumber\\
	& 0 \leqslant q^{n-3}u+m(q+1) \lambda_{\mu}-m |{\bm{\gamma}}|+  r_{\mu} < m(q+1) \textup{ for }  \mu =1,\cdots, q-1, \nonumber \\
	& (m(q+1)-q^{n-3}) u+m(q+1)|{\bm{\lambda}}| + m |{\bm{\gamma}}| \leqslant  r_0  
	\Big\}.
	\end{align} 
	In addition we have $ \#\Theta_{\bm{r},\,\bm{s},\,t}=\# \Omega_{\bm{r},\,\bm{s},\,t}   $.
\end{corollary}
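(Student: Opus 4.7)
My plan is to mirror the proof of Theorem \ref{thm:basis1}, with the valuation at $P_\infty$ replacing the valuation at $P_0$ as the distinguishing invariant. The first step is a direct verification that $\Lambda_{u,\bm{\lambda},\bm{\gamma}} \in \mathcal{L}(G)$ for each $(u,\bm{\lambda},\bm{\gamma}) \in \Theta_{\bm{r},\bm{s},t}$. Reading off the divisor formula for $\Lambda_{u,\bm{\lambda},\bm{\gamma}}$ displayed just above the corollary, the four families of inequalities defining $\Theta_{\bm{r},\bm{s},t}$ are precisely the non-negativity conditions for the coefficients of $\Div(\Lambda_{u,\bm{\lambda},\bm{\gamma}}) + G$ at $P_\infty$, at $Q_\nu$ ($\nu \geqslant 1$), at $P_\mu$ ($\mu \geqslant 1$), and at $P_0$ respectively; the upper bounds $< m$ and $< m(q+1)$ in the second and third lines act as normalization constraints pinning down $\gamma_\nu$ and $\lambda_\mu$ uniquely once $u$ is fixed.

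Linear independence is then immediate: $v_{P_\infty}(\Lambda_{u,\bm{\lambda},\bm{\gamma}}) = u$, and the uniqueness of $(\bm{\lambda},\bm{\gamma})$ given $u$ forces distinct lattice points in $\Theta_{\bm{r},\bm{s},t}$ to yield elements with pairwise distinct $P_\infty$-valuations. Combined with Theorem \ref{thm:basis1}, this already delivers $\#\Theta_{\bm{r},\bm{s},t} \leqslant \ell(G) = \#\Omega_{\bm{r},\bm{s},t}$, so it suffices to establish the reverse inequality $\#\Theta_{\bm{r},\bm{s},t} \geqslant \#\Omega_{\bm{r},\bm{s},t}$, after which the spanning statement and the final identity $\#\Theta_{\bm{r},\bm{s},t} = \#\Omega_{\bm{r},\bm{s},t}$ both follow simultaneously.

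The reverse inequality is the main obstacle, and I would attack it by a combinatorial count that replays, on $\Theta_{\bm{r},\bm{s},t}$, the reduction scheme used for $\Omega_{\bm{r},\bm{s},t}$ in Lemmas \ref{lem:omeganoorder}--\ref{lem:omegasr=0}. First I would prove the $\Theta$-analog of the permutation symmetry of Lemma \ref{lem:omeganoorder}, via a change of variables that decomposes $u$ modulo a suitable integer; here the coprimality $\gcd(q^{n-3}, m) = 1$, which is forced by $m = (q^n+1)/(q+1)$ together with $\gcd(q, q^n+1) = 1$, should ensure that the substitution is bijective on lattice points. Next I would peel the parameters $r_\mu$ and $s_\nu$ off one at a time exactly as in Lemmas \ref{lem:omegas2=0} and \ref{lem:omegasr=0}, showing that each unit of $r_\mu$ contributes one new lattice point and each unit of $s_\nu$ contributes $q$, thereby reducing to the core count $\#\Theta_{\bm{0},\bm{0},t}$. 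Finally, a substitution of the form $u = a' + m(b' + (q+1)c')$ with $0 \leqslant a' < m$, $0 \leqslant b' \leqslant q$, $c' \geqslant 0$ should identify $\Theta_{\bm{0},\bm{0},t}$ bijectively with the telescopic semigroup lattice set $\Psi(t)$ of Lemma \ref{lem:PsiR}, giving $\#\Theta_{\bm{0},\bm{0},t} = 1 - g + t$ for $t \geqslant 2g-1$ and matching $\#\Omega_{\bm{0},\bm{0},t}$. The technically delicate step is the very first one, since in $\Theta$ the coordinate $\lambda_\mu$ couples to $|\bm{\gamma}|$ through the third defining condition — a coupling absent in $\Omega$, where $\bm{j}$ and $\bm{k}$ depend only on $i$ — so the change of variables must be chosen to disentangle them before the peeling lemmas can be applied.
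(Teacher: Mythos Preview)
Your plan is correct and would succeed, but it takes a far more laborious route than the paper. The paper does not redevelop any of the combinatorics of Lemmas \ref{lem:omeganoorder}--\ref{lem:omegasr=0} for $\Theta$; instead it observes that the two putative bases are \emph{identical as sets of functions}. Concretely, using the curve relations $x^q+x=\prod_{\mu}(x-\alpha_\mu)$ and $y^{q^2}-y=\prod_{\nu}(y-\beta_\nu)=z^m$, each $\Lambda_{u,\bm{\lambda},\bm{\gamma}}$ can be rewritten as a monomial in $z$, $x-\alpha_\mu$ ($\mu\geqslant 1$), $y-\beta_\nu$ ($\nu\geqslant 1$), i.e.\ as some $E_{i,\bm{j},\bm{k}}$, and vice versa. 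The paper writes down the explicit affine change of index variables in both directions (for instance $u=-q^3 i-m(q+1)|\bm{j}|-mq|\bm{k}|$, etc.) and then simply checks that this map carries $\Omega_{\bm{r},\bm{s},t}$ bijectively onto $\Theta_{\bm{r},\bm{s},t}$. With that, Theorem \ref{thm:basis1} transfers wholesale and the corollary is immediate.

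What each approach buys: the paper's bijection is a one-paragraph argument once the formulas are stated, and it gives the equality $\#\Theta=\#\Omega$ for \emph{all} parameters simultaneously, with no large-degree hypothesis and no appeal to Riemann--Roch. Your approach, by contrast, would have to replicate the full reduction chain (symmetry, peeling of $r_\mu$ and $s_\nu$, telescopic count) in the presence of the $\lambda_\mu$--$|\bm{\gamma}|$ coupling you correctly flag as delicate, and then still run an embedding argument (enlarging $t$ rather than $r_0$) to pass from large degree to the general case. It works, but you are reproving Theorem \ref{thm:basis1} from scratch in a second coordinate system when a linear change of coordinates in $\mathbb{Z}^{q^2+q-1}$ transports the already-proved result for free.
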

\begin{proof}
	It suffices to prove that the set
	\begin{align*}
	&\Big\{ \Lambda_{u,\bm{\lambda},\bm{\gamma}} \,\,\Big| \,\,(u,\bm{\lambda},\bm{\gamma})\in \Theta_{\bm{r},\,\bm{s},\,t} \Big\} \\
	\intertext{equals the set}	
	&\Big\{ E_{i,\,\bm{j},\,\bm{k}} \,\,\Big| \,\, (i,\bm{j},\,\bm{k})\in \Omega_{\bm{r},\,\bm{s},\,t} \Big\}.
	\end{align*}
	In fact, for fixed $ (u,\bm{\lambda},\bm{\gamma}) \in \mathbb{Z}^{q^2+q-1}$, we obtain $\Lambda_{u,\bm{\lambda},\bm{\gamma}}$ equals $ E_{i,\,\bm{j},\,\bm{k}} $ with
	\begin{align*}
	i&= - (m(q+1)-q^{n-3}) u-m(q+1)|{\bm{\lambda}}| - m |{\bm{\gamma}}| , \\
	j_{\mu}&=u+|{\bm{\lambda}}|+\lambda_{\mu} \text{ for }  \mu =1,\cdots, q-1, \\
	k_{\nu}&= (q+1)(u+|{\bm{\lambda}}|) +|{\bm{\gamma}}|+\gamma_{\nu}  \text{ for }  \nu =1,\cdots, q^2-1.
	\end{align*}
	On the contrary, if we set
	\begin{align*}
	u&=-q^3 i-m(q+1)|{\bm{j}}|-mq|{\bm{k}}| ,\\
	\lambda_{\mu}&=q^2 i +q^{n-1}  |{\bm{j}}|+m|{\bm{k}}|+j_{\mu} \text{ for }  \mu =1,\cdots, q-1, \\ 
	\gamma_{\nu} &= (q+1)(i+q^{n-3}|{\bm{j}}|)+q^{n-2}|{\bm{k}}|+k_{\nu} \text{ for }  \nu =1,\cdots, q^2-1,
	\end{align*}
	then $ E_{i,\,\bm{j},\,\bm{k}} $ is exactly the element $ \Lambda_{u,\bm{\lambda},\bm{\gamma}} $.
	Therefore, if we restrict $ (i,\bm{j},\,\bm{k}) $ in $ \Omega_{\bm{r},\,\bm{s},\,t}  $, then we must have  $(u,\bm{\lambda},\bm{\gamma})$ is in $ \Theta_{\bm{r},\,\bm{s},\,t} $ and vice versa. This completes the proof of this corollary. 
\end{proof}

In the following, we will demonstrate an interesting property of $ \# \Omega_{{\bm{r}} ,{\bm{s}},t} $ for GK curves with a specific vector $ {\bm{s}} $.  
\begin{corollary}\label{cor:sym_t_r0}
	Let $ n=3 $ and the vectors $ {\bm{r,s}} $ be given by
	\begin{align*}
	{\bm{r}}&:=(r_0,r_1,\cdots,r_{q-1}),\\
	{\bm{s}}&:=  ( s_1', s_2', \cdots,s_{q^2-1}' ) \\
	&\phantom{:}=(
	\underbrace{s_1,s_1,\cdots,s_1}_{q+1}, \underbrace{s_2,s_2,\cdots,s_2}_{q+1},\cdots,
	\underbrace{s_{q-1},s_{q-1},\cdots,s_{q-1}}_{q+1}) . 
	\end{align*}  
	Then the lattice point set $ \Omega_{\bm{r},\,\bm{s},\,t}  $ is symmetric with respect to $ r_0,r_1,\cdots,r_{q-1},t  $. In other words, we have	$ \# \Omega_{\bm{r},\,\bm{s},\,t}  = \# \Omega_{\bm{r'},\,\bm{s},\,t'}  $, where the sequence $ \big(r_i\big)_{i=0}^{q}   $ is equal to $ \big(r_i'\big)_{i=0}^{q} $ up to permutation by putting $   r_q :=t $ and $   r_q' :=t' $. 	
\end{corollary}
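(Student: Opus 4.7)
The plan is to combine Corollary \ref{cor:thetabase2} with Lemma \ref{lem:omeganoorder}. Since Lemma \ref{lem:omeganoorder} already guarantees that $\#\Omega_{\bm{r},\,\bm{s},\,t}$ is invariant under permutations of the $q$ entries $r_0,r_1,\ldots,r_{q-1}$, it suffices to produce one further transposition swapping $r_0$ with $t$; combined with the given permutations, this generates the full symmetric group on $\{r_0,r_1,\ldots,r_{q-1},t\}$.

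The key idea is that for $n=3$ the alternative basis $\{\Lambda_{u,\bm{\lambda},\bm{\gamma}}\}$ provided by Corollary \ref{cor:thetabase2} exposes a symmetry between $P_0$ and $P_\infty$ that is not visible in the $E$-basis. For $n=3$ we have $q^{n-3}=1$ and $m(q+1)-q^{n-3}=q^3$, so $\Theta_{\bm{r},\,\bm{s},\,t}$ is defined by the conditions
\begin{align*}
&u\geqslant -t,\qquad 0\leqslant u+m\gamma_\nu+s_\nu<m,\\
&0\leqslant u+m(q+1)\lambda_\mu-m|\bm{\gamma}|+r_\mu<m(q+1),\\
&q^3 u+m(q+1)|\bm{\lambda}|+m|\bm{\gamma}|\leqslant r_0.
\end{align*}
The block structure of $\bm{s}$ forces the same block structure on $\bm{\gamma}$ via $\gamma_\nu=\ceil{(-u-s_\nu)/m}$, so $\sigma:=|\bm{\gamma}|/(q+1)$ is a well-defined integer.

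Next, I perform the linear substitution $\lambda_\mu=\lambda_\mu'+\sigma$ for $\mu=1,\ldots,q-1$. Then $|\bm{\lambda}|=|\bm{\lambda}'|+(q-1)\sigma$, which gives $m(q+1)|\bm{\lambda}|+m|\bm{\gamma}|=m(q+1)|\bm{\lambda}'|+mq|\bm{\gamma}|$, while the $\lambda_\mu$-condition collapses to $0\leqslant u+m(q+1)\lambda_\mu'+r_\mu<m(q+1)$. Renaming $(i,\bm{j},\bm{k}):=(u,\bm{\lambda}',\bm{\gamma})$, these four conditions are exactly those defining $\Omega_{(t,r_1,\ldots,r_{q-1}),\,\bm{s},\,r_0}$. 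Hence by Corollary \ref{cor:thetabase2},
\[
\#\Omega_{\bm{r},\,\bm{s},\,t}=\#\Theta_{\bm{r},\,\bm{s},\,t}=\#\Omega_{(t,r_1,\ldots,r_{q-1}),\,\bm{s},\,r_0},
\]
which is the required transposition, and the claim follows by combining with Lemma \ref{lem:omeganoorder}.

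The main obstacle is keeping track of how the extra $\pm m|\bm{\gamma}|$ terms appearing in $\Theta_{\bm{r},\,\bm{s},\,t}$, but absent from $\Omega_{\bm{r},\,\bm{s},\,t}$, combine into the standard $mq|\bm{k}|$ term after the substitution. This works only because $n=3$ forces $q^{n-3}=1$, making the $\Theta$ and $\Omega$ descriptions algebraically compatible, and because the prescribed block structure of $\bm{s}$ ensures that $|\bm{\gamma}|/(q+1)$ is an integer; without the block hypothesis the substitution $\lambda_\mu=\lambda_\mu'+\sigma$ would not preserve integrality.
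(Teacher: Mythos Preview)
Your proof is correct and follows essentially the same route as the paper: reduce via Lemma~\ref{lem:omeganoorder} to the single transposition $r_0\leftrightarrow t$, pass through $\Theta_{\bm r,\bm s,t}$ using Corollary~\ref{cor:thetabase2} with $n=3$, exploit the block form of $\bm s$ to ensure $(q+1)\mid |\bm\gamma|$, and then perform the integer shift $\lambda_\mu\mapsto\lambda_\mu-\dfrac{|\bm\gamma|}{q+1}$ to recognise the resulting set as $\Omega_{(t,r_1,\ldots,r_{q-1}),\,\bm s,\,r_0}$. The only cosmetic difference is notation for the shift and for the entries of $\bm s$ (the paper writes $s_\nu'$ for the $\nu$-th entry).
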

\begin{proof}	
	Denote $ {\bm{r}}:=(r_0,\dot{{\bm{r}}} )=(r_0,r_1,\cdots,r_{q-1})  $ and $ \Omega_{(r_0,\dot{{\bm{r}}} ),{\bm{s}},\,t} :=\Omega_{\bm{r},\,\bm{s},\,t}  $. By   Lemma \ref{lem:omeganoorder}, it suffices to prove that 
	\begin{align*} 
	\# \Omega_{(r_0,\dot{\bm{r}} ),\,\bm{s},\,t}
	=\#\Theta_{(r_0,\dot{\bm{r}} ),\bm{s},\,t}
	=\#\Omega_{(t,\dot{\bm{r}} ),\,\bm{s},\,r_0}.
	\end{align*}

	The first identity follows directly from Corollary \ref{cor:thetabase2}.
	Applying Corollary \ref{cor:thetabase2} again gives the set $ \Theta_{(r_0,\dot{\bm{r}} ),\bm{s},\,t} $ as
	\begin{align*} 
	\Big\{& (u,\bm{\lambda},\bm{\gamma}) \,\,\Big| \,\,  u + t  \geqslant 0,\nonumber \\
	& 0 \leqslant u+m \gamma_{\nu}+s_{\nu}' <m \text{ for }  \nu =1,\cdots, q^2-1   , \nonumber \\
	& 0 \leqslant u+m(q+1) \lambda_{\mu}-m |{\bm{\gamma}}|+  r_{\mu} < m(q+1) \text{ for }  \mu =1,\cdots, q-1, \nonumber \\
	&  q^3 u+m(q+1)|{\bm{\lambda}}| + m |{\bm{\gamma}}|  \leqslant r_0 
	\Big\}.
	\end{align*} 
	From our assumption, it is obvious that $ |{\bm{\gamma}}| $ is divisible by $ q+1 $. So if we take $ i:=u $, $ j_{\mu}:= \lambda_{\mu}-\dfrac{|{\bm{\gamma}}|}{q+1} $ for $ \mu \geqslant 1 $ and $ k_{\nu} := \gamma_{\nu}$ for $ \nu \geqslant 1 $, 
	then $ \Theta_{(r_0,\dot{{\bm{r}}} ),{\bm{s}},\,t} $ is equivalent to
	\begin{align*} 
	\Big\{& (i,\bm{j},\,\bm{k}) \,\,\Big| \,\,  i + t  \geqslant 0,\nonumber \\
	& 0 \leqslant i+m k_{\nu}+s_{\nu}' <m \text{ for }  \nu =1,\cdots, q^2-1 , \nonumber \\
	& 0 \leqslant i+m(q+1) j_{\mu}+  r_{\mu} < m(q+1) \text{ for }  \mu =1,\cdots, q-1, \nonumber \\
	& { q^3 i+m(q+1)|{\bm{j}}| + m q|{\bm{k}}|  \leqslant r_0 }  
	\Big\}.
	\end{align*} 
	The last set is exactly $ \Omega_{(t,\dot{{\bm{r}}} ),{\bm{s}},r_0} $ by definition. Hence the second identity is just shown, completing the whole proof. 	 	
\end{proof}

\section{The AG codes from GGS curves}\label{sec:proAG}

This section settles the properties of AG codes from GGS curves. Generally speaking, there are two classical ways of constructing AG codes associated with divisors $ D $ and $ G $, where $ G $ is a divisor of arbitrary function field $ F $ and $ D:=Q_1+\cdots+Q_N $ is another divisor of $ F $ such that $ Q_1,\cdots,Q_N $ are pairwise distinct rational places, each not belonging to the support of $ G $. One construction is based on the Riemann-Roch space $ \mathcal{L}(G) $,
\begin{equation*}
C_{\mathcal{L}}(D,G):=\Big\{ (f(Q_1),\cdots,f(Q_N))\,\,\Big| \,\,f\in \mathcal{L}(G) \Big\} \subseteq \mathbb{F}_q^N.
\end{equation*}
The other one depends on the space of differentials $ \Omega(G-D) $,
\begin{equation*}
C_{\Omega}(D,G):= \Big\{ (\res_{Q_1}(\eta),\cdots,\res_{Q_N}(\eta))\,\,\Big| \,\, \eta \in \Omega(G-D)  \Big\}.
\end{equation*}
It is well-known the codes $ C_{\mathcal{L}}(D,G) $ and $ C_{\Omega}(D,G) $ are dual to each other. Further $ C_{\Omega}(D,G) $ has parameters $ [N,k_{\Omega},d_{\Omega}] $ with 
$ k_{\Omega} =N-k $ and $ d_{\Omega}\geqslant \deg(G)-(2g-2) $, where $ k= \ell( G)-\ell(G-D)$ is the dimension of $ C_{\mathcal{L}}(D,G) $. If moreover $ 2g-2 < \deg(G) < N $ then \begin{equation*} 
k_{\Omega}=N+g-1-\deg(G).
\end{equation*}
The reader is referred to \cite{stichtenoth} for more information.

{In this section, we follow the notation given in Section \ref{sec:Bases}.
	Let $  D  $ be the direct sum of all $ \mathbb{F}_{q^{2n}} $-rational places except the $ \mathbb{F}_{q^{2}} $-rational places of the function field $ \mathbb{F}_{q^{2n}}(\textup{GGS}(q, n) ) $, namely,  $ D:=\sum_{\alpha,\beta,\gamma \atop \gamma \neq 0} \mathcal{P}_{\alpha,\beta,\gamma} $, where $ \alpha,\beta,\gamma $ are elements in $\mathbb{F}_{q^{2n}} $ satisfying $ \alpha^q+\alpha=\beta^{q+1}  $ and $ \beta^{q^2} -\beta = \gamma^m $.
	Now we will study the AG code
$ C_{\mathcal{L}}(D,G)  $ with $ G:=\sum_{\mu=0}^{q-1} r_{\mu}P_{\mu} + \sum_{\nu=1}^{q^2-1}s_{\nu} Q_{\nu}+tP_{\infty} $. }
The length of $ C_{\mathcal{L}}(D,G)  $ is 
\begin{align*}
N:=\deg(D)= q^{n+2}(q^n-q+1)-q^3.
\end{align*}
It is well known that the dimension of $  C_{\mathcal{L}}(D,G) $ is given by
\begin{align}\label{eq:dim}
\dim C_{\mathcal{L}}(D,G)  =\ell(G)-\ell(G-D) .
\end{align}  
Set $ R:=N+2g-2 $. {If} $ \deg(G) > R $, we deduce from the Riemann-Roch Theorem and \eqref{eq:dim} that 
\begin{align*}
\dim C_{\mathcal{L}}(D,G) 
& = (1-g+\deg (G))- (1-g+\deg(G-D))\\
& = \deg (G)-\deg(G-D)=N,
\end{align*} 
which implies that $ C_{\mathcal{L}}(D,G) $ is trivial. So we only consider the case $ 0\leqslant \deg(G) \leqslant R $.

Now, we use the following lemmas to calculate the dual of $ C_{\mathcal{L}}(D,G) $. 
\begin{lemma}[\cite{stichtenoth}, Proposition 2.2.10]\label{lem:dual}
	Let $ \tau $ be an element {of a function field} such that $ v_{P_i}(\tau)=1 $ for all rational places $ P_i $ contained in
	the divisor $ D $. Then the dual of $ C_{\mathcal{L}}(D,G)$ is
	\begin{align*}
	C_{\mathcal{L}}(D,G)^{\bot} = C_{\mathcal{L}} (D,D-G+\Div(d\tau)-\Div(\tau)).
	\end{align*}
\end{lemma}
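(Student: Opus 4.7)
The plan is to derive the statement by combining two classical facts about AG codes. The first is the duality $C_{\mathcal{L}}(D,G)^{\bot} = C_{\Omega}(D,G)$, which reduces the problem to identifying $C_{\Omega}(D,G)$ with $C_{\mathcal{L}}(D,\,D-G+\Div(d\tau)-\Div(\tau))$. The second is the change-of-differential principle: for any Weil differential $\eta$ satisfying $v_{P_i}(\eta) = -1$ and $\res_{P_i}(\eta) = 1$ at every place $P_i$ in $\supp(D)$, the map $\omega \mapsto \omega/\eta$ induces a linear isomorphism from $\Omega(G-D)$ onto $\mathcal{L}(D-G+\Div(\eta))$ that intertwines residue evaluation with point evaluation at the $P_i$.

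The heart of the proof is to take $\eta := d\tau/\tau$ and verify the two normalization conditions place-by-place. The hypothesis $v_{P_i}(\tau)=1$ says that $\tau$ is a local uniformizer at $P_i$, so $v_{P_i}(d\tau) = 0$ and hence $v_{P_i}(\eta) = -1$. Expanding $\eta$ in the local parameter $\tau$ yields leading term $\tau^{-1}\,d\tau$, whence $\res_{P_i}(\eta) = 1$. A direct calculation of divisors gives $\Div(\eta) = \Div(d\tau) - \Div(\tau)$, matching the divisor on the right-hand side of the lemma.

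With these preparations, the final identification is a short computation. For $\omega \in \Omega(G-D)$, set $f := \omega/\eta$. The condition $\Div(\omega) \geqslant G - D$ translates term-by-term into $\Div(f) \geqslant G - D - \Div(\eta)$, so $f \in \mathcal{L}(D - G + \Div(\eta))$, and conversely every such $f$ arises this way. Since $P_i \notin \supp(G)$ we have $v_{P_i}(\omega) \geqslant -1$, hence $v_{P_i}(f) = v_{P_i}(\omega) - v_{P_i}(\eta) \geqslant 0$ and $f$ is regular at $P_i$. Therefore $\res_{P_i}(\omega) = \res_{P_i}(f\eta) = f(P_i)\cdot \res_{P_i}(\eta) = f(P_i)$, so the residue codeword of $\omega$ equals the evaluation codeword of $f$. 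This gives $C_{\Omega}(D,G) = C_{\mathcal{L}}(D,\,D-G+\Div(\eta))$, and combined with the duality in the first step we obtain the lemma. The only delicate point is verifying the local normalization of $\eta = d\tau/\tau$ at each $P_i$; once $\tau$ is known to be a uniformizer at every $P_i$, the remainder of the argument is formal.
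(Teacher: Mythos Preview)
Your proof is correct and follows the standard argument for this classical result. Note, however, that the paper does not supply its own proof of this lemma: it is quoted verbatim as Proposition~2.2.10 of Stichtenoth's book~\cite{stichtenoth} and invoked as a black box in the proof of Theorem~\ref{th:dualcode}. The argument you give---taking $\eta = d\tau/\tau$, verifying $v_{P_i}(\eta)=-1$ and $\res_{P_i}(\eta)=1$ from the uniformizer hypothesis, and then transporting $\Omega(G-D)$ to $\mathcal{L}(D-G+\Div(\eta))$ via $\omega\mapsto\omega/\eta$---is precisely the proof given in the cited reference, so there is no genuine difference in approach to report.
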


\begin{lemma}[\cite{stichtenoth}, Proposition 2.2.14]\label{lem:eqdual}
	Suppose that $ G_1 $ and $ G_2 $ are divisors with $ G_1=G_2+\Div(\rho) $ for some $ \rho\in F\backslash \{0\} $ and $ \supp G_1 \cap \supp D = \supp G_2 \cap \supp D =\varnothing $. Let $ N:=\deg (D) $ and $ \bm{\varrho}:= ( \rho(P_1),\cdots,\rho(P_N))   $ with
	$ P_i \in D $. Then the codes 
	$ C_{\mathcal{L}}(D,G_1)$ and $ C_{\mathcal{L}}(D,G_2)$ are equivalent and	\begin{align*}
	C_{\mathcal{L}}(D,G_2)  = \bm{\varrho} \cdot C_{\mathcal{L}} (D,G_1).
	\end{align*}
\end{lemma}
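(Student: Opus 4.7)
The plan is to exhibit an explicit bijection between the two Riemann-Roch spaces induced by multiplication by $\rho$, and then transport this bijection along the evaluation map that defines the codes. The key observation is that multiplication by $\rho$ is a well-known $\mathbb{F}_q$-linear isomorphism $\mathcal{L}(G_1) \to \mathcal{L}(G_2)$: indeed, for $f \in F\setminus\{0\}$, the condition $\Div(f)+G_1 \geqslant 0$ is equivalent to $\Div(f)+G_2+\Div(\rho) \geqslant 0$, i.e.\ $\Div(f\rho)+G_2 \geqslant 0$, so $f \mapsto f\rho$ is a bijection with inverse $g \mapsto g/\rho$. This is the structural backbone of the lemma.

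Next, I would check that $\rho$ evaluates to nonzero scalars at each place of $\supp D$. Since $\Div(\rho) = G_1 - G_2$ and both $\supp G_1$ and $\supp G_2$ are disjoint from $\supp D$, we have $v_{P_i}(\rho) = 0$ for each $P_i$ in $\supp D$. Therefore $\rho(P_i)$ is a well-defined element of $\mathbb{F}_q^*$, and the vector $\bm{\varrho} = (\rho(P_1),\dots,\rho(P_N))$ lies in $(\mathbb{F}_q^*)^N$. This step is where the hypothesis $\supp G_j \cap \supp D = \varnothing$ is essential; without it, the evaluation $\rho(P_i)$ could be $0$ or $\infty$ and the claim would fail.

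With these two ingredients I would finish as follows. For any $f \in \mathcal{L}(G_1)$, the evaluation satisfies $(f\rho)(P_i) = f(P_i)\rho(P_i)$ for every $i$, so
\begin{equation*}
\bigl((f\rho)(P_1),\dots,(f\rho)(P_N)\bigr) \;=\; \bm{\varrho}\cdot \bigl(f(P_1),\dots,f(P_N)\bigr),
\end{equation*}
where $\bm{\varrho}\cdot(\,\cdot\,)$ denotes componentwise (Hadamard) product. Running $f$ over $\mathcal{L}(G_1)$, and using that $f \mapsto f\rho$ is a bijection onto $\mathcal{L}(G_2)$, this shows $C_{\mathcal{L}}(D,G_2) = \bm{\varrho}\cdot C_{\mathcal{L}}(D,G_1)$. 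Finally, because all components of $\bm{\varrho}$ are nonzero, componentwise multiplication by $\bm{\varrho}$ is a monomial transformation of $\mathbb{F}_q^N$, so it is precisely a code equivalence; hence $C_{\mathcal{L}}(D,G_1)$ and $C_{\mathcal{L}}(D,G_2)$ have the same length, dimension and minimum distance.

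The main (and only real) obstacle is the verification that $\rho(P_i)$ is both defined and nonzero at every place of $\supp D$; once the disjointness of supports is used to produce $v_{P_i}(\rho)=0$, the remainder of the argument is a direct translation of the isomorphism $\mathcal{L}(G_1)\cong \mathcal{L}(G_2)$ through the evaluation map. I would not expect to need anything beyond the definitions of $\Div$, $\mathcal{L}(\cdot)$, and $C_{\mathcal{L}}(D,G)$ recalled earlier in this section.
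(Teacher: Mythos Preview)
Your argument is correct and is exactly the standard proof of this fact. Note, however, that the paper does not give its own proof of this lemma: it is quoted from Stichtenoth's textbook (Proposition~2.2.14) and stated without proof, so there is nothing in the paper to compare against.
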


The dimension and the dual code of $ C_{\mathcal{L}}(D,G) $ can be described below.
\begin{theorem}\label{th:dualcode}
	Let $ A:= (q^n+1)(q-1)-1 $, $ B:=mq^2(q^n-q^3)+(q^n+1)(q^2-1)-1  $  and $ \rho:= 1+\sum_{i=1}^{\frac{n-3}{2}} z^{(q^n+1)(q-1)   \sum_{j=1}^{i} q^{2j} } $. Then the dual code of $ C_{\mathcal{L}}(D,G) $ is given as follows.
	\begin{enumerate}	 
		\item [$ (1) $] The dual of $ C_{\mathcal{L}}(D,G) $ is represented as
		\begin{align*}
		C_{\mathcal{L}}(D,G)^{\bot} = \bm{\varrho} \cdot C_{\mathcal{L}} (D,\sum_{\mu=0}^{q-1} (A-r_{\mu})P_{\mu} + \sum_{\nu=1}^{q^2-1}(A-s_{\nu}) Q_{\nu}+(B-t)P_{\infty}),
		\end{align*}
		where $ \bm{\varrho}:= ( \rho(\mathcal{P}_{\alpha_1,\beta_1,\gamma_1}),\cdots,\rho(\mathcal{P}_{\alpha_N,\beta_N,\gamma_N}) )   $ with
		$ \mathcal{P}_{\alpha_i,\beta_i,\gamma_i} \in D $. 
		\item [$ (2) $] In particular, for $ n=3 $, we have $ \rho=1 $ and 
		\begin{align*}
		C_{\mathcal{L}}(D,G)^{\bot} = C_{\mathcal{L}} (D,\sum_{\mu=0}^{q-1} (A-r_{\mu})P_{\mu} + \sum_{\nu=1}^{q^2-1}(A-s_{\nu}) Q_{\nu}+(B-t)P_{\infty}).
		\end{align*}
	\end{enumerate}
\end{theorem}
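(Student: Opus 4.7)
The strategy is to apply Lemma~\ref{lem:dual} with a carefully chosen $\tau$, which writes $C_{\mathcal{L}}(D,G)^{\perp}$ as $C_{\mathcal{L}}(D,D-G+\Div(d\tau)-\Div(\tau))$, and then to invoke Lemma~\ref{lem:eqdual} with the $\rho$ of the statement to recognize this code as $\bm{\varrho}\cdot C_{\mathcal{L}}(D,H')$, where $H' := \sum_{\mu=0}^{q-1}(A-r_{\mu})P_{\mu} + \sum_{\nu=1}^{q^{2}-1}(A-s_{\nu})Q_{\nu} + (B-t)P_{\infty}$.

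First I would construct a function $\tau$ with $v_{P}(\tau)=1$ at every place $P$ in the support of $D$, for instance by adjusting $z^{q^{2n}-1}-1$ into a polynomial in $z$ whose zeros on the curve are exactly the places $P\in D$. The hypothesis is verified using that the covering $z\colon\textup{GGS}(q,n)\to\mathbb{P}^{1}$ is unramified at every finite place (a direct fiber count from the defining equations: the fiber size equals $q^{3}=\deg z$ at every affine $z$-value). Using Proposition~\ref{prop:divisor} I would compute $\Div(\tau)$, and Riemann-Hurwitz applied to $z$ gives $\Div(dz)=(2g-2)P_{\infty}$, from which $\Div(d\tau)$ follows (noting that $d(z^{k})=kz^{k-1}dz$ kills the top-order Frobenius terms in characteristic $p$).

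The key identification is
\begin{equation*}
\rho = \frac{(y^{q^{n-1}}-y)^{q^{2}}}{z^{mq^{2}}}.
\end{equation*}
To see this matches the $\rho$ of the statement, iterate $y^{q^{2}}-y=z^{m}$ -- taking the $q^{2k}$-th power in characteristic $p$ gives $y^{q^{2k+2}}-y^{q^{2k}}=z^{mq^{2k}}$ -- and telescope to obtain $y^{q^{n-1}}-y=\sum_{k=0}^{(n-3)/2}z^{mq^{2k}}$. Raising to the $q^{2}$-th power distributes over the sum in characteristic $p$, and dividing by $z^{mq^{2}}$ yields $\rho=1+\sum_{i=1}^{(n-3)/2}z^{mq^{2}(q^{2i}-1)}$; the identity $m(q^{2}-1)=(q^{n}+1)(q-1)$ then gives $mq^{2}(q^{2i}-1)=(q^{n}+1)(q-1)\sum_{j=1}^{i}q^{2j}$, matching the stated form exactly. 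In particular, for $n=3$ the sum is empty and $\rho=1$.

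Finally I would verify the identity $H=H'+\Div(\rho)$ by computing $\Div(y^{q^{n-1}}-y)=\sum_{c\in\mathbb{F}_{q^{n-1}}}\Div(y-c)$ and combining with $-mq^{2}\Div(z)$ from Proposition~\ref{prop:divisor}(3). The factors with $c\in\mathbb{F}_{q^{2}}$ contribute $m\sum_{\mu}P_{\mu}+m\sum_{\nu\geqslant 1}Q_{\nu}$ by Proposition~\ref{prop:divisor}(2), which multiplied by $q^{2}$ cancels the rational part of $mq^{2}\Div(z)$; the factors with $c\in\mathbb{F}_{q^{n-1}}\setminus\mathbb{F}_{q^{2}}$ have all zeros on non-rational places of the curve (since $\mathbb{F}_{q^{n-1}}\cap\mathbb{F}_{q^{2n}}=\mathbb{F}_{q^{2}}$ for odd $n$), and these must match the non-rational zeros of $\tau$ appearing in $\Div(\tau)$. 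The constants $A$ and $B$ are defined precisely so that the resulting $P_{\mu}$, $Q_{\nu}$, and $P_{\infty}$ coefficients of $H-H'$ coincide with those of $\Div(\rho)$. Applying Lemma~\ref{lem:eqdual} then yields part~(1); part~(2) is immediate since for $n=3$ we have $\rho=1$ and $\bm{\varrho}=(1,\ldots,1)$, so the equivalence of codes in Lemma~\ref{lem:eqdual} collapses to the stated equality. The principal technical hurdle is the bookkeeping of the non-rational contributions: one must carefully match the non-rational zeros of $\tau$ with those of $(y^{q^{n-1}}-y)^{q^{2}}$ to ensure $H-H'=\Div(\rho)$ at every type of place.
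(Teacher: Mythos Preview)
Your overall architecture matches the paper's: choose $\tau=\prod_{\gamma\in H}(z-\gamma)$ with $H=\{\gamma\in\mathbb{F}_{q^{2n}}^{*}\mid \exists\,y,\ y^{q^{2}}-y=\gamma^{m}\}$, apply Lemma~\ref{lem:dual}, then Lemma~\ref{lem:eqdual}. The paper, however, arrives at $\rho$ more directly. It quotes from \cite{abdon2009further} a closed polynomial expression for $\tau$ in the variable $w=z^{(q^{n}+1)(q-1)}$, differentiates that expression term by term (most terms die because their exponents are divisible by the characteristic), and reads off $d\tau=-\,w^{(q^{n}-q)/(q^{2}-1)}\rho\,z^{(q^{n}+1)(q-1)-1}\,dz$. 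From this, $\Div(d\tau)=A\cdot\Div(z)+\Div(dz)+\Div(\rho)$ and the rest is bookkeeping with Proposition~\ref{prop:divisor} and $\Div(dz)=(2g-2)P_{\infty}$.

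The crucial point is that the paper \emph{never computes} $\Div(\rho)$. Once $d\tau$ is written as above, $\Div(\rho)$ is carried along as an opaque summand and then absorbed by Lemma~\ref{lem:eqdual} (take $G_{1}=H'+\Div(\rho)$, $G_{2}=H'$). So your ``principal technical hurdle''---matching the non-rational zeros of $\tau$ with those of $(y^{q^{n-1}}-y)^{q^{2}}$---is entirely self-imposed. Your identity $\rho=(y^{q^{n-1}}-y)^{q^{2}}/z^{mq^{2}}$ is correct and rather elegant (the telescoping you describe works), but it is not needed for the theorem; drop that whole paragraph and the proof becomes short. Two further cautions if you do pursue the detour: first, for $n\geqslant 5$ one has $\gcd(n-1,2n)=2$, so $\mathbb{F}_{q^{n-1}}\not\subseteq\mathbb{F}_{q^{2n}}$ and the individual factors $y-c$ with $c\in\mathbb{F}_{q^{n-1}}\setminus\mathbb{F}_{q^{2}}$ are not functions over the constant field; second, your sketch is vague about the explicit polynomial form of $\tau$ needed to differentiate---``adjusting $z^{q^{2n}-1}-1$'' does not by itself yield the formula the paper takes from \cite{abdon2009further}, and without it the computation of $d\tau$ is not justified.
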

\begin{proof}
	Define 
	\begin{align*}
	H:=\Big\{z\in \mathbb{F}_{q^{2n}}^* \,\,\Big| \,\, \exists y \in \mathbb{F}_{q^{2n}} \textup{ with } y^{q^2}-y=z^m \Big\}. 
	\end{align*}	
	Consider the element
	\begin{align*}
	\tau:=\prod_{\gamma\in H }(z-\gamma).
	\end{align*}
	Then $ \tau $ is a prime element for all places $ \mathcal{P}_{\alpha,\beta,\gamma} $ in $ D $ and its divisor is
	\begin{align*}
	\Div(\tau)=\sum_{\gamma\in H }\Div(z-\gamma)=D-\deg(D) P_{\infty} ,
	\end{align*}	
	where  $ D=\sum_{\alpha,\beta,\gamma \atop \gamma \neq 0}\mathcal{P}_{\alpha,\beta,\gamma} $ and $ N =\deg(D)=q^{n+2}(q^n-q+1)-q^3 $.
	Moreover by a same discussion as in the proof of Lemma 2 in \cite{abdon2009further}, we have
	\begin{align*}
	\tau= 1+\sum_{i=0}^{k-1}w^{  \sum_{j=0}^i q^{2j} + \sum_{j=0}^{k-1} q^{2j+1} } + \sum_{i=0}^{k-1}w^{ \sum_{j=0}^i   q^{2j+1} }, 
	\end{align*}
	where $ n=2k+1 $ (note that $ n>1 $ is odd) and $ w=z^{(q^n+1)(q-1)} $. Then a straightforward computation shows
	\begin{align*}
	d\tau&= w^{   \sum_{j=0}^{k-1} q^{2j+1} } 
	\Big( 1+\sum_{i=1}^{k-1} w^{   \sum_{j=1}^{i} q^{2j} }\Big) dw\\
	&= w^{   \frac{q^n-q}{q^2-1} } 
	\Big( 1+\sum_{i=1}^{k-1} w^{   \sum_{j=1}^{i} q^{2j} }\Big) dw,\\
	dw&=-z^{(q^n+1)(q-1)-1}dz.
	\end{align*}
	Let $ \rho:= 1+\sum_{i=1}^{k-1} w^{   \sum_{j=1}^{i} q^{2j} } $ and denote its divisor by $ \Div(\rho) $. Set 
	\begin{align*}
	A&:= m(q^n-q)+(q^n+1)(q-1)-1 ,\\
	S&:=q^3A - 2g+2.
	\end{align*}
	Since $ \Div(dz)=(2g-2)P_{\infty} $ (see Lemma 3.8 of \cite{Guneri2013}), it follows from Proposition \ref{prop:divisor} that  
	\begin{align*}
	\Div(d\tau)&=A\cdot \Div(z)+ \Div(dz)+\Div(\rho)\\
	&=A \sum_{\beta \in \mathbb{F}_{q^2}} Q_{\beta} -\Big(q^3A - 2g+2\Big)P_{\infty}+\Div(\rho)\\
	& =A \sum_{\mu=0}^{q-1} P_{\mu} + A \sum_{\nu=1}^{q^2-1}  Q_{\nu}-S P_{\infty}+\Div(\rho).
	\end{align*}
	Let $ \eta:= d\tau/\tau $ be a Weil differential. The divisor of $ \eta $ is
	\begin{align*}
	\Div(\eta)&=\Div(d\tau)-\Div(\tau)\\
	& =A \sum_{\mu=0}^{q-1} P_{\mu} + A \sum_{\nu=1}^{q^2-1}  Q_{\nu}  -D +\
	\Big(\deg(D)-S\Big) P_{\infty} +\Div(\rho) .
	\end{align*}	 
	By writing $ B:=\deg(D)-S=mq^2(q^n-q^3)+(q^n+1)(q^2-1)-1 $, we establish from Lemma \ref{lem:dual} that the dual of $ C_{\mathcal{L}}(D,G) $ is  
	\begin{align*}
	C_{\mathcal{L}}(D,G)^{\bot} 
	&= C_{\mathcal{L}} (D,D-G+\Div(\eta))\\
	&=C_{\mathcal{L}} (D,\sum_{\mu=0}^{q-1} (A-r_{\mu})P_{\mu} + \sum_{\nu=1}^{q^2-1}(A-s_{\nu}) Q_{\nu}+(B-t)P_{\infty} +\Div(\rho)).
	\end{align*} 
	Denote $ \bm{\bm{\varrho}}:= ( \rho(\mathcal{P}_{\alpha_1,\beta_1,\gamma_1}),\cdots,\rho(\mathcal{P}_{\alpha_N,\beta_N,\gamma_N}) )   $ with
	$ \mathcal{P}_{\alpha_i,\beta_i,\gamma_i} \in D $. Then we deduce the first statement from Lemma \ref{lem:eqdual}. The second statement then follows immediately.
\end{proof}

\begin{theorem}\label{th:dim}
	Suppose that $ 0\leqslant \deg(G) \leqslant R $. Then the dimension of $ C_{\mathcal{L}}(D,G) $ is given by
	\begin{align*}
	\dim C_{\mathcal{L}}(D,G)=\left\{\begin{array}{lll} 
	\# \Omega_{\bm{r},\,\bm{s},\,t}    && \textup{ if } 0\leqslant \deg(G) < N,\\
	N-\# \Omega_{\bm{r},\,\bm{s},\,t} ^{\bot}  && \textup{ if } N \leqslant \deg(G) \leqslant R,
	\end{array}
	\right.
	\end{align*}
	where $ \Omega_{\bm{r},\,\bm{s},\,t} ^{\bot}:= \Omega_{\bm{r'},\,\bm{s'},\,B-t}  $  with
	$ {\bm{r'}}=(A-r_0,\,\cdots,\,A-r_{q-1}) $ and $ {\bm{s'}}=(A-s_1,\,\cdots,\,A-s_{q^2-1}) $.

\end{theorem}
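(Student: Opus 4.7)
The plan is to reduce both cases to the dimension formula $\dim C_{\mathcal{L}}(D,G)=\ell(G)-\ell(G-D)$ from \eqref{eq:dim}, together with Theorem \ref{thm:basis1} (to evaluate $\ell$) and Theorem \ref{th:dualcode} (to swap to the dual when $\deg G$ is too large for the first piece alone to be useful).

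For the range $0\leqslant \deg G <N$, I would observe that $\deg(G-D)=\deg G-N<0$ forces $\ell(G-D)=0$, so $\dim C_{\mathcal{L}}(D,G)=\ell(G)$. Theorem \ref{thm:basis1} applies to the divisor $G=\sum r_\mu P_\mu+\sum s_\nu Q_\nu+tP_\infty$ with no sign restriction on the coefficients (its proof obtains the general case from the large-$r_0$ case by intersecting with the valuation constraint at $P_0$), so it gives $\ell(G)=\#\Omega_{\bm{r},\bm{s},t}$ immediately, proving the first formula.

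For the range $N\leqslant \deg G\leqslant R$, I would pass to the dual. By Theorem \ref{th:dualcode}(1) the dual code is equivalent to $C_{\mathcal{L}}(D,G')$ with $G':=\sum_{\mu=0}^{q-1}(A-r_\mu)P_\mu+\sum_{\nu=1}^{q^2-1}(A-s_\nu)Q_\nu+(B-t)P_\infty$, and equivalent codes share the same dimension, so $\dim C_{\mathcal{L}}(D,G)=N-\dim C_{\mathcal{L}}(D,G')$. A direct calculation, using $\deg Q_\nu=q$, gives
\[
\deg G'=qA+q(q^2-1)A+B-\deg G=q^3A+B-\deg G.
\]
From the proof of Theorem \ref{th:dualcode} one has $B=\deg(D)-S=N-q^3A+2g-2$, hence $q^3A+B=N+2g-2=R$ and therefore $\deg G'=R-\deg G\in[0,R-N]=[0,2g-2]$. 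Since for the GGS curve one has $N>2g-2$ (easy to check from the explicit values $N=q^{n+2}(q^n-q+1)-q^3$ and $2g=(q-1)(q^{n+1}+q^n-q^2)$), we have $\deg G'<N$ and the previous case applies to $G'$, giving $\dim C_{\mathcal{L}}(D,G')=\#\Omega_{\bm{r}',\bm{s}',B-t}=\#\Omega^{\bot}_{\bm{r},\bm{s},t}$, which finishes the second formula.

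The only real obstacle is bookkeeping: one must verify the identity $q^3A+B=R$ (which comes for free from the computation of $\Div(\eta)$ in Theorem \ref{th:dualcode}), the inequality $N>2g-2$ (routine from the explicit genus and length), and that Theorem \ref{thm:basis1} is genuinely unrestricted in the sign of $(\bm{r},\bm{s},t)$ so that it can be applied to $G'$ whose components $A-r_\mu$, $A-s_\nu$, $B-t$ may be of arbitrary sign. All these steps are straightforward, and no deeper input beyond the already-established Theorems \ref{thm:basis1} and \ref{th:dualcode} is needed.
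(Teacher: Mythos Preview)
Your proposal is correct and follows essentially the same route as the paper: first case via Theorem~\ref{thm:basis1} and $\ell(G-D)=0$, second case via the dual description in Theorem~\ref{th:dualcode} reducing to the first. You have in fact filled in details the paper leaves implicit (the verification that $\deg G'=R-\deg G<N$ via $q^3A+B=R$ and $N>2g-2$, and the remark that Theorem~\ref{thm:basis1} carries no sign restriction on the coefficients), so nothing further is needed.
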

\begin{proof}
	For $ 0\leqslant \deg(G) < N $, we have by Theorem \ref{thm:basis1} and Equation \eqref{eq:dim} that
	\begin{align*}
	\dim  C_{\mathcal{L}}(D,G) = \ell(G) =  \# \Omega_{\bm{r},\,\bm{s},\,t} . 
	\end{align*} 
	
	For $  N \leqslant \deg(G) \leqslant R $, Theorem \ref{th:dualcode} yields that
	\begin{align*}
	\dim  C_{\mathcal{L}}(D,G) = N-\dim  C_{\mathcal{L}}(D,G)^{\bot} = N- \# \Omega_{\bm{r},\,\bm{s},\,t} ^{\bot}. 
	\end{align*} 
	So the proof is completed. 
\end{proof}

{Goppa bound provides an estimate for the minimum distance of $ C_{\mathcal{L}}(D,G) $.} Techniques for improving the Goppa bounds will be dealt with in the next two sections.

\section{Weierstrass semigroups and pure gaps}\label{sec:Weiersemipureg}
In this section, we will characterize the Weierstrass semigroups and pure gaps {on GGS curves}, which will enables us to obtain improved bounds on the parameters of AG codes. 

We need some preliminary notation and results before we begin. For an arbitrary function field $ F $, let $ Q_1,\cdots, Q_l $ be distinct rational places of $ F $, then the Weierstrass semigroup
$ H(Q_1,\cdots, Q_l) $ is defined by
\[
\Big\{(s_1,\cdots, s_l)\in \mathbb{N}_0^l\,\,\Big| \,\,\exists f\in F \text{ with } (f)_{\infty}=\sum_{i=1}^l s_i Q_i  \Big\},
\]
and the Weierstrass gap set $  G(Q_1,\cdots, Q_l)  $ is defined by $ \mathbb{N}_0^l \backslash H(Q_1,\cdots, Q_l) $, where $ \mathbb{N}_0 $ denotes the set of nonnegative integers. The details are found in  \cite{matthews2004weierstrass}.

Homma and Kim \cite{Homma2001Goppa} introduced the concept of pure gap set with respect to a pair of rational places.  This was generalized by
Carvalho and Torres \cite{carvalho2005goppa} to several rational places, denoted by $ G_0(Q_1,\cdots, Q_l) $, which is given by
\begin{align*}
\Big\{&(s_1,\cdots,s_l)\in \mathbb{N}^l\,\,\Big| \,\,\ell(G) = \ell(G -Q_j ) \text{ for all }1\leqslant j \leqslant l, \text{ where }G=\sum_{i=1}^l s_iQ_i \Big\}.
\end{align*}
In addition, they showed in { Lemma 2.5 of \cite{carvalho2005goppa}} that $ (s_1,\cdots,s_l)  $ is a pure gap at $ (Q_1,\cdots, Q_l) $ if and only if
\begin{align*}
\ell(s_1Q_1+\cdots+s_l Q_l)=\ell((s_1-1)Q_1+\cdots+(s_l-1) Q_l).
\end{align*}

A useful way to calculate the Weierstrass semigroups is given as follows, which can be regarded as an easy generalization of {Lemma 2.1 due to Kim \cite{kim}}.
\begin{lemma}[{\cite{carvalho2005goppa}, Lemma 2.2}]
	\label{lem:Weierstsemi}
	For rational places $ Q_1,\cdots,Q_l $ with
	$ 1 \leqslant l \leqslant r $, the set $ H(Q_1,\cdots,Q_l) $ is given by
	\begin{align*}
	\Big\{&(s_1,\cdots,s_l)\in \mathbb{N}_0^l\,\,\Big| \,\,\ell(G) \neq \ell(G -Q_j ) \text{ for all } 1\leqslant j \leqslant l, \text{ where } G=\sum_{i=1}^l s_iQ_i \Big\}.
	\end{align*}
\end{lemma}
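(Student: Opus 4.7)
The plan is to establish both inclusions of the claimed set equality, with the reverse inclusion being the substantive step.

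\emph{The easy direction.} Suppose $(s_1,\dots,s_l)\in H(Q_1,\dots,Q_l)$, and pick $f\in F$ with $(f)_\infty = G := \sum_{i=1}^l s_i Q_i$. Then $v_{Q_j}(f) = -s_j$ exactly, so $f\in\mathcal{L}(G)\setminus\mathcal{L}(G-Q_j)$ for every $j$, which forces $\ell(G)>\ell(G-Q_j)$. This direction needs no field-size hypothesis.

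\emph{The substantive direction.} Now assume $\ell(G)\neq\ell(G-Q_j)$ for every $j$. Because each $Q_j$ is a rational place of degree one, the inclusion $\mathcal{L}(G-Q_j)\hookrightarrow\mathcal{L}(G)$ has codimension at most one, and the hypothesis upgrades this to codimension exactly one. For each $j$ I would therefore select some $f_j\in\mathcal{L}(G)\setminus\mathcal{L}(G-Q_j)$, which in turn has $v_{Q_j}(f_j)=-s_j$ exactly. To assemble one function $f$ with $v_{Q_j}(f)=-s_j$ simultaneously for every $j$, I would consider the $\mathbb{F}_{q^{2n}}$-linear functionals $\phi_j\colon \mathcal{L}(G)\to\mathbb{F}_{q^{2n}}$ sending a function to the leading coefficient of its pole expansion at $Q_j$ with respect to a chosen local uniformizer. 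These are well defined on $\mathcal{L}(G)$, and $\ker\phi_j = \mathcal{L}(G-Q_j)$, so the codimension-one statement means each $\phi_j$ is surjective. The task is to produce $f\in\mathcal{L}(G)$ with $\phi_j(f)\neq 0$ for every $j$, since any such $f$ has $(f)_\infty = G$.

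\emph{The common witness.} A generic $\mathbb{F}_{q^{2n}}$-linear combination $f = \sum_j a_j f_j$ will work: each condition $\phi_j(f)\neq 0$ excludes a hyperplane in the coefficient space $\mathbb{F}_{q^{2n}}^l$, and so the admissible choices of $(a_1,\dots,a_l)$ are the complement of a union of $l$ hyperplanes, which is nonempty as soon as $|\mathbb{F}_{q^{2n}}|>l$. The main (and really the only) obstacle in this scheme is this last linear-combination step: one must verify that the base field is large enough for the generic-combination trick to apply. In the GGS setting the field $\mathbb{F}_{q^{2n}}$ is enormous relative to any $l$ that will occur in practice, so this obstacle never bites, and the reverse inclusion follows.
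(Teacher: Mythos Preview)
The paper does not supply its own proof of this lemma; it is quoted verbatim from Carvalho--Torres (Lemma~2.2 of \cite{carvalho2005goppa}), described as ``an easy generalization of Lemma~2.1 due to Kim''. So there is no in-paper argument to compare against, and your self-contained argument is a welcome addition.

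Your proof is correct for the purposes of this paper, with one small wrinkle worth patching. In the forward direction, when some $s_j=0$ the condition $(f)_\infty=G$ only gives $v_{Q_j}(f)\geqslant 0$, not $v_{Q_j}(f)=0$; if $f$ happens to vanish at $Q_j$ then $f$ itself does not witness $\ell(G)>\ell(G-Q_j)$. This is harmless, since the constant $1\in\mathcal{L}(G)\setminus\mathcal{L}(G-Q_j)$ works at any index with $s_j=0$, but you should say so explicitly. Your reverse direction already handles $s_j=0$ correctly: there $\phi_j$ is simply evaluation at $Q_j$, and $\ker\phi_j=\mathcal{L}(G-Q_j)$ continues to hold.

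You are right that the hyperplane-avoidance step is the one genuine obstacle, and that it needs $|\mathbb{F}_{q^{2n}}|\geqslant l$ (a vector space over $K$ cannot be a union of fewer than $|K|+1$ proper subspaces). In this paper the lemma is only invoked in Theorem~\ref{lem:Weierstsemi2} for the places $P_0,\dots,P_l,P_\infty$ with $l<q$, so at most $q+1$ places over $\mathbb{F}_{q^{2n}}$ with $n\geqslant 3$; the bound $q+1\leqslant q^{2n}$ is therefore satisfied with enormous slack, exactly as you note. The Carvalho--Torres original is phrased without a field-size hypothesis, so their argument presumably proceeds differently, but for every application in the present paper your version suffices.
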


In the rest of this section, we will restrict our study to the divisor $  G:=\sum_{\mu=0}^{q-1} r_{\mu}P_{\mu}  +t P_{\infty}  $ and denote $ {\bm{r}}=(r_0,r_1,\cdots,r_{q-1}) $. Our main task is to determine the Weierstrass semigroups and the pure gaps at {distinct rational places} $ P_0, P_1, \cdots, P_{q-1} , P_{\infty} $. Before we proceed, some auxiliary results are presented in the following.  Denote $ \Omega_{\bm{r},\,\bm{0},\,t} $ by $ \Omega_{(r_0,\,r_1,\,\cdots,\,r_{q-1}),\, t }$ for the clarity of description.  

\begin{lemma}\label{lem:omegawith1}
	For the lattice point set $ \Omega_{(r_0,\,r_1,\,\cdots,\,r_{q-1}) ,\, t} $, we have 
	the following assertions.
	\begin{enumerate}
		\item [$ (1) $]
		$ \#	
		\Omega_{(r_0,\,r_1,\,\cdots,\,r_{q-1}) ,\, t }  = \#	\Omega_{(r_0-1,\,r_1,\,\cdots,\,r_{q-1}) ,\, t }+1 $ if and only if 	 
		\begin{align*}  \sum\limits_{\mu=1 }^{q-1} \ceil{ \dfrac{r_0-r_{\mu}}{m(q+1)}}
		+q(q-1) \ceil{ \dfrac{r_0 }{m }}
		\leqslant\frac{ t+ q^3 r_0 }{m(q+1)} .
		\end{align*}  
		\item  [$ (2) $]
		$ \#	
		\Omega_{(r_0,\,r_1,\,\cdots,\,r_{q-1}) ,\, t }  = \#	\Omega_{(r_0,\,r_1,\,\cdots,\,r_{q-1} ), \,t-1 }+1  $ if and only if 	 
		\begin{align*} 
		\sum\limits_{\mu=1 }^{q-1} \ceil{ \dfrac{q^{n-3}t-r_{\mu}}{m(q+1)}}
		+q(q-1) \ceil{ \dfrac{q^{n-3}t }{m }}
		\leqslant t+ \frac{r_0 -q^{n-3}t}{m(q+1)}  .
		\end{align*}
	\end{enumerate}	
	
\end{lemma}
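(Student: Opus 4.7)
The plan is to analyze the two difference sets directly, using the two bases for $\mathcal{L}(G)$ established in Theorem \ref{thm:basis1} and Corollary \ref{cor:thetabase2}, which are compatible with the valuations at $P_0$ and $P_\infty$ respectively.

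For part (1), note that replacing $r_0$ by $r_0 - 1$ only tightens the single defining inequality $i + r_0 \geqslant 0$ of $\Omega_{\bm{r},\bm{0},t}$: the ceiling formulas for $\bm{j}, \bm{k}$ involve $r_\mu$ with $\mu \geqslant 1$ and $s_\nu$ but not $r_0$, and $r_0$ does not appear in the final inequality either. Hence $\Omega_{(r_0-1, r_1, \ldots, r_{q-1}), t} \subseteq \Omega_{(r_0, r_1, \ldots, r_{q-1}), t}$, and the difference consists exactly of tuples with $i = -r_0$. For this single value of $i$, the ceiling formulas uniquely determine $j_\mu = \ceil{(r_0 - r_\mu)/m(q+1)}$ and $k_\nu = \ceil{r_0/m}$, so $|\bm{k}| = (q^2 - 1)\ceil{r_0/m}$. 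The difference therefore has cardinality $0$ or $1$, and equals $1$ precisely when the remaining constraint $q^3 i + m(q+1)|\bm{j}| + mq|\bm{k}| \leqslant t$ holds. Dividing through by $m(q+1)$ and using $q(q^2 - 1)/(q+1) = q(q-1)$ yields the inequality in (1).

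For part (2), the analogous direct comparison via $\Omega$ is awkward because decrementing $t$ affects only the last inequality and may admit several new tuples. Instead I would switch to the alternative basis $\Lambda_{u,\bm{\lambda},\bm{\gamma}}$ from Corollary \ref{cor:thetabase2}, which satisfies $\#\Theta_{\bm{r},\bm{0},t} = \#\Omega_{\bm{r},\bm{0},t}$; the key point is that $t$ enters the definition of $\Theta_{\bm{r},\bm{0},t}$ only through the single constraint $u \geqslant -t$. Therefore $\Theta_{\bm{r},\bm{0},t-1} \subseteq \Theta_{\bm{r},\bm{0},t}$, and the difference consists precisely of tuples with $u = -t$. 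Substituting $u = -t$ and $\bm{s} = \bm{0}$ into the half-open constraints uniquely determines $\gamma_\nu = \ceil{q^{n-3}t/m}$, whence $|\bm{\gamma}| = (q^2 - 1)\ceil{q^{n-3}t/m}$; and because $(q^2 - 1)/(q+1) = q - 1$ is an integer, the constraint defining $\lambda_\mu$ splits cleanly as $\lambda_\mu = \ceil{(q^{n-3}t - r_\mu)/m(q+1)} + (q-1)\ceil{q^{n-3}t/m}$.

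Substituting these expressions into the remaining inequality $(m(q+1) - q^{n-3})u + m(q+1)|\bm{\lambda}| + m|\bm{\gamma}| \leqslant r_0$, dividing by $m(q+1)$, and collecting the two $\ceil{q^{n-3}t/m}$ contributions from $|\bm{\lambda}|$ and $|\bm{\gamma}|/(q+1)$ into the single coefficient $(q-1)^2 + (q-1) = q(q-1)$ delivers the inequality in (2). The only step beyond routine algebra is recognising the divisibility $m|\bm{\gamma}|/m(q+1) = (q-1)\ceil{q^{n-3}t/m}$ that allows $\lambda_\mu$ to decompose as an integer shift of the intended ceiling; once this is in place, both parts follow by the same translation between the half-open interval and ceiling formulations of the constraints.
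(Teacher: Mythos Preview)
Your proof is correct and follows essentially the same route as the paper: for part~(1) you compare $\Omega_{\bm{r},\bm{0},t}$ directly with $\Omega_{(r_0-1,r_1,\ldots,r_{q-1}),\bm{0},t}$ and isolate the single tuple with $i=-r_0$, and for part~(2) you pass to the alternative index set $\Theta_{\bm{r},\bm{0},t}$ of Corollary~\ref{cor:thetabase2} so that $t$ appears only through $u\geqslant -t$, isolating the single tuple with $u=-t$. Your write-up in fact supplies more of the algebra (in particular the divisibility $(q^2-1)/(q+1)=q-1$ that lets the ceiling for $\lambda_\mu$ split cleanly) than the paper spells out, but the structure of the argument is identical.
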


\begin{proof}
	Consider two lattice point sets $\Omega_{(r_0,\,r_1,\,\cdots,\,r_{q-1} ), \,t}$ and $\Omega_{(r_0-1,\,r_1,\,\cdots,r_{q-1} ),\, t}$, which are given in Equation \eqref{eq:omegarceiljmu}. Clearly, the latter one is a subset of the former one, and the complement set $ \Phi $ of $\Omega_{(r_0-1,\,r_1,\,\cdots,\,r_{q-1} ), \,t}$ in $\Omega_{(r_0,\,r_1,\,\cdots,\,r_{q-1} ),\, t}$ is given  by
	\begin{align*}
	\Phi :
	=&\Big\{ (i, \,\bm{j},\,\bm{k}) \,	\Big	|\,\,	i+r_0= 0 ,\\
	& j_{\mu} =  \ceil{\frac{-i-r_{\mu}}{m(q+1)}  }  \text{ for }   \mu =1,\cdots, q-1,\nonumber \\
	& k_{\nu} =  \ceil{\frac{-i}{m}  }  \text{ for }   \nu =1,\cdots, q^2-1,\nonumber \\
	&q^3 i+m(q+1)|{\bm{j}}|+mq|{\bm{k}}|    \leqslant t
	\Big\},
	\end{align*}
	where $ {\bm{j}}=(j_1,j_2,\cdots,j_{q-1}) $ and $ {\bm{k}}=(k_1,k_2,\cdots,k_{q^2-1}) $. 
	It follows immediately that the set $ \Phi $ is not empty if and only if
	\begin{align*} -q^3 r_0 +	m(q+1) \sum\limits_{\mu=1 }^{q-1} \ceil{ \dfrac{r_0-r_{\mu}}{m(q+1)}}
	+mq(q^2-1) \ceil{ \dfrac{r_0 }{m }}
	\leqslant t,
	\end{align*} 
	which concludes the first assertion.

	For the second assertion, we obtain from  Corollary~\ref{cor:thetabase2} that
	the difference between $ \#	\Omega_{(r_0,\,r_1,\,\cdots,\,r_{q-1}) ,\, t }$ and $ \#	 \Omega_{(r_0,\,r_1,\,\cdots,\,r_{q-1} ), \,t-1 } $ is exactly
	the same as the one between $ \#\Theta_{\bm{r},\bm{0},\,t}$ and $\# \Theta_{\bm{r},\,\bm{0},\,t-1}$.
	In an analogous way, we define $ \Psi $ as the
	complementary set of $\Theta_{\bm{r},\,\bm{0},\,t-1}$ in $\Theta_{\bm{r},\,\bm{0},\,t}$, namely
	\begin{align*} 
	\Psi:=\Big\{& (u,\bm{\lambda},\bm{\gamma}) \,\,\Big| \,\,  u  = - t,\nonumber \\
	& 0 \leqslant q^{n-3}u+m \gamma_{\nu} <m \text{ for }  \nu =1,\cdots, q^2-1, \nonumber \\
	& 0 \leqslant q^{n-3}u+m(q+1) \lambda_{\mu}-m |{\bm{\gamma}}|+  r_{\mu} < m(q+1) \text{ for }  \mu =1,\cdots, q-1, \nonumber \\
	& -\Big((m(q+1)-q^{n-3}) u+m(q+1)|{\bm{\lambda}}| + m |{\bm{\gamma}}|\Big) +r_0 \geqslant 0  
	\Big\}.
	\end{align*} 
	The set $ \Psi $ is not empty if and only if
	\begin{align*} 
	m(q+1) \sum\limits_{\mu=1 }^{q-1} \ceil{ \dfrac{q^{n-3}t-r_{\mu}}{m(q+1)}}
	+mq(q^2-1) \ceil{ \dfrac{q^{n-3}t }{m }}
	\leqslant r_0+ (q^n+1-q^{n-3})t  ,
	\end{align*}
	completing the proof of the second assertion.	
\end{proof}

We are now ready for the main results of this section  dealing with Weierstrass semigroups and pure gap sets, which play an interesting role in finding codes with good parameters. For simplicity, we write $ {\bm{r}}_l=(r_0,r_1,\cdots,r_{l}) $ and define
\begin{align*}
W_j({\bm{r}}_l,t,l):=&\sum\limits_{i=0  \atop i\neq j }^{l} \ceil{ \dfrac{r_j-r_i}{m(q+1)}}
+(q-1-l)\ceil{ \dfrac{r_j }{m(q+1) }}   +q(q -1) \ceil{ \dfrac{r_j }{m }}
- \dfrac{t+q^3 r_j }{m(q+1)},  
\intertext{for $ j =1,\cdots,l $,  and } 
W_{\infty}({\bm{r}}_l,t,l):=&\sum\limits_{i=1 }^{l} \ceil{ \dfrac{q^{n-3}t-r_{i}}{m(q+1)}}
+(q-1-l)\ceil{ \dfrac{q^{n-3}t}{m(q+1)}}\\
&+q(q-1) \ceil{ \dfrac{q^{n-3}t }{m }}
- t- \frac{r_0-q^{n-3}t }{m(q+1)}.
\end{align*} 
\begin{theorem}
	\label{lem:Weierstsemi2}
	Let $ P_0,P_1,\cdots,P_l $ be rational places as defined previously. For $  0 \leqslant l < q $, the following assertions hold.
	\begin{enumerate}
		\item [$ (1) $] The Weierstrass semigroup $ H(P_0,P_1,\cdots,P_l)$ is given by
		\begin{align*}
		\Big \{ (r_0,r_1,\cdots,r_{l})\in \mathbb{N}_0^{l+1}  \,\,\Big| \,\, W_j({\bm{r}}_l,0,l)
		\leqslant 0 \text{ for all }   0 \leqslant j \leqslant l
		\Big \}.
		\end{align*} 
		\item [$ (2) $] The Weierstrass semigroup $ H(P_0,P_1,\cdots,P_l,P_{\infty})$ is given by
		\begin{align*}
		\Big \{ (r_0,r_1,\cdots,r_{l},t)\in \mathbb{N}_0^{l+2}  \,\,\Big| \,\,
		W_j({\bm{r}}_l,t,l)
		\leqslant 0 \text{ for all }   0 \leqslant j \leqslant l \text{ and }
		{	W_{\infty}({\bm{r}}_l,t,l)
			\leqslant 0   }
		\Big \}.
		\end{align*}		
		\item [$ (3) $]
		The pure gap set $ G_0(P_0,P_1,\cdots,P_l)$ is given by
		\begin{align*}
		\Big\{(r_0,r_1,\cdots,r_l)\in \mathbb{N}^{l+1}\,\,\Big |\,\,
		W_j({\bm{r}}_l,0,l)
		> 0 \text{ for all }   0 \leqslant j \leqslant l \Big\}.
		\end{align*}
		\item [$ (4) $] The pure gap set $ G_0(P_0,P_1,\cdots,P_l,P_{\infty})$ is given by
		\begin{align*}
		\Big \{ (r_0,r_1,\cdots,r_{l},t)\in \mathbb{N}^{l+2}  \,\,\Big| \,\,
		W_j({\bm{r}}_l,t,l)
		> 0 \text{ for all }   0 \leqslant j \leqslant l \text{ and } 	{ W_{\infty}({\bm{r}}_l,t,l)
		> 0 }
		\Big \}.
		\end{align*}
	\end{enumerate}
\end{theorem}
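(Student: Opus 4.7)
The plan is to reduce all four parts to a single incremental criterion: for each relevant place $P$, decide whether $\ell(G) = \ell(G-P)+1$. Once this is done, parts (1) and (2) (the semigroup) follow by demanding this strict drop at every slot, and parts (3) and (4) (the pure gap set) follow by demanding equality at every slot, by Lemma \ref{lem:Weierstsemi} and the pure-gap characterization of \cite{carvalho2005goppa}. Since $\ell$ decreases by either $0$ or $1$ when a single rational place is subtracted, the semigroup and pure-gap conditions are complementary slot-by-slot, so parts (3), (4) are merely the negations of parts (1), (2) restricted to $(r_i) \in \mathbb{N}^{l+1}$ or $\mathbb{N}^{l+2}$.

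To carry out the slot analysis, I would embed our divisor into the Section \ref{sec:Bases} framework by setting $\tilde{\bm{r}}=(r_0,\dots,r_l,0,\dots,0)\in\mathbb{Z}^q$ and $\bm{s}=\bm{0}$, and taking $t=0$ for parts (1), (3). Theorem \ref{thm:basis1} then gives $\ell(G)=\#\Omega_{\tilde{\bm{r}},\bm{0},t}$, and likewise $\ell(G-P_j)=\#\Omega_{\tilde{\bm{r}}-e_j,\bm{0},t}$ and $\ell(G-P_\infty)=\#\Omega_{\tilde{\bm{r}},\bm{0},t-1}$. Lemma \ref{lem:omegawith1}(1) tells us exactly when removing a unit from the zeroth coordinate of $\tilde{\bm{r}}$ decreases $\#\Omega$ by $1$. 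To handle a general coordinate $r_j$ with $1\leqslant j\leqslant l$, I would invoke Lemma \ref{lem:omeganoorder} to permute $r_j$ into the zeroth slot. The hypothesis of Lemma \ref{lem:omegawith1}(1) then becomes
\[
\sum_{\mu=1}^{q-1}\ceil{\frac{r_j-\tilde r_\mu}{m(q+1)}}+q(q-1)\ceil{\frac{r_j}{m}}\leqslant \frac{t+q^3r_j}{m(q+1)},
\]
and under the truncation $\tilde r_\mu=0$ for $\mu>l$ the left-hand sum cleanly splits as $\sum_{i\neq j,\,0\leqslant i\leqslant l}\ceil{(r_j-r_i)/(m(q+1))}+(q-1-l)\ceil{r_j/(m(q+1))}$. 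Rearranging produces exactly $W_j(\bm{r}_l,t,l)\leqslant 0$.

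For the $P_\infty$-slot in parts (2), (4), I would directly invoke Lemma \ref{lem:omegawith1}(2) under the same truncation; the terms $\ceil{(q^{n-3}t-\tilde r_\mu)/(m(q+1))}$ split in the analogous way into $\sum_{i=1}^l\ceil{(q^{n-3}t-r_i)/(m(q+1))}+(q-1-l)\ceil{q^{n-3}t/(m(q+1))}$, and the criterion becomes $W_\infty(\bm{r}_l,t,l)\leqslant 0$. Assembling the $l+1$ (or $l+2$) slot conditions according to whether we want strict drop (semigroup) or no drop (pure gap), and whether $P_\infty$ is present, proves all four assertions.

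The main obstacle is bookkeeping rather than analysis: one must carefully verify that the symmetric extension of Lemma \ref{lem:omegawith1}(1) from the $r_0$ coordinate to an arbitrary $r_j$ is compatible with the truncation $\tilde r_{l+1}=\cdots=\tilde r_{q-1}=0$, so that the coefficient $q-1-l$ appearing in $W_j$ arises for the correct reason. Once that alignment is established, the theorem is essentially a dictionary translation between Lemma \ref{lem:omegawith1} and the statement above.
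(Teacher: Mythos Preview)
Your proposal is correct and matches the paper's own proof, which simply cites Theorem \ref{thm:basis1}, Lemmas \ref{lem:omeganoorder}, \ref{lem:Weierstsemi}, and \ref{lem:omegawith1} without further detail. You have accurately unpacked the bookkeeping---the permutation via Lemma \ref{lem:omeganoorder} to move $r_j$ into the zeroth slot and the splitting of the sum into $l$ nontrivial terms plus $(q-1-l)$ zero terms---that the paper leaves implicit.
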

\begin{proof}
	The desired conclusions follow from Theorem \ref{thm:basis1},  Lemmas \ref{lem:omeganoorder}, \ref{lem:Weierstsemi} and \ref{lem:omegawith1}.	
\end{proof}

{In the literature, Beelen and Montanucci \cite[Theorem 1.1]{Beelen2018} studied the Weierstrass semigroups $ H(P) $ at any point $ P $ on the GK curve. For the GGS curve, $ H(P_0) $ was
determined in \cite[Proposition 4.3]{Bartoli2017} and $ G(P_0) $ was described independently in \cite[Corollary 18]{Barelli2018} for $ P_0 \neq P_{\infty} $, while $ H(P_{\infty}) $ and $ G(P_{\infty}) $ were given in \cite[Corollary 3.5 and Theorem 3.7]{Guneri2013}, respectively. Here we give another characterization and an alternative proof of $ H(P_0) $ and $ G(P_0) $ using our main theorem.}
\begin{corollary} 
	With notation as before, we have the following statements.\\
	 $ (1) $ 
		$ H(P_0)= \Big \{ k\in \mathbb{N}_0   \,\,\Big| \,\, 
		(q-1)\ceil{ \dfrac{k }{m(q+1) }} 
		+q(q -1) \ceil{ \dfrac{k }{m }}
		\leqslant \dfrac{q^3 k}{m(q+1)}  
		\Big \} $.\\
		 $ (2) $ Let $ a,b,c\in \mathbb{Z} $. 
		Then $ a+m(b+(q+1)c)\in \mathbb{N} $ is a gap at 
		$P_0$ if and only if exactly one of the following two conditions is satisfied:
		\begin{enumerate}
			\item [$ (i) $]
			 $ a=0 $, $ 0<b\leqslant q-1 $ and $ 0 \leqslant c \leqslant q-1-b $;
		
		\item [$ (ii) $] $ 0< a<m $, $ 0\leqslant b\leqslant q $, $ 0 \leqslant c \leqslant q^2-1-b+ \floor{\dfrac{b}{q+1}-\dfrac{q^3 a}{m(q+1)}} $ and  \\ $ b- \floor{\dfrac{b}{q+1}-\dfrac{q^3 a}{m(q+1)}} \leqslant q^2-1 $.

	\end{enumerate}
\end{corollary}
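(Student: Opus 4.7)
The plan is to derive part (1) as an immediate specialisation of Theorem \ref{lem:Weierstsemi2}(1), and then to translate the resulting inequality into the parameterised gap conditions in part (2) via the canonical decomposition of a positive integer modulo $m$ and $m(q+1)$.

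For part (1), I would set $l = 0$ in Theorem \ref{lem:Weierstsemi2}(1). Then $\bm{r}_0$ reduces to the single entry $r_0 = k$, the sum $\sum_{i\neq j}$ defining $W_0$ is empty, and the term $(q-1-l)\lceil r_0/(m(q+1))\rceil$ becomes $(q-1)\lceil k/(m(q+1))\rceil$. Hence
\begin{align*}
W_0(\bm{r}_0,0,0) \;=\; (q-1)\ceil{\tfrac{k}{m(q+1)}} + q(q-1)\ceil{\tfrac{k}{m}} - \tfrac{q^3 k}{m(q+1)},
\end{align*}
and Theorem \ref{lem:Weierstsemi2}(1) states $k \in H(P_0) \Leftrightarrow W_0 \leqslant 0$, which is exactly the desired description.

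For part (2), a positive integer $k$ is a gap iff $W_0 > 0$. Every $k \geqslant 1$ has a unique representation $k = a + m(b+(q+1)c)$ with $0\leqslant a<m$, $0\leqslant b\leqslant q$, $c\geqslant 0$, and the two cases split according to whether $a = 0$ or $a > 0$ (which makes conditions (i) and (ii) automatically mutually exclusive, handling the \emph{exactly one} phrasing). In each case I would compute the two ceilings explicitly: when $a=0$ one has $\lceil k/m \rceil = b+(q+1)c$ and $\lceil k/(m(q+1))\rceil = c + [b\geqslant 1]$; when $0 < a < m$ both ceilings acquire an extra $+1$, using $0 < a+mb < m(q+1)$. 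In case (i) the subcase $b=0$ collapses to $c < 0$, eliminating gaps there, while for $b\geqslant 1$ using $\tfrac{q^3}{q+1} - q(q-1) = \tfrac{q}{q+1}$ reduces $W_0>0$ to $c < q-1-b+\tfrac{b}{q+1}$, equivalently $0\leqslant c\leqslant q-1-b$ (and hence $b\leqslant q-1$), giving condition (i). In case (ii) the analogous manipulation produces
\begin{align*}
c \;<\; q^2-1-b+\delta, \qquad \delta := \tfrac{b}{q+1} - \tfrac{q^3 a}{m(q+1)} = \tfrac{mb - q^3 a}{m(q+1)}.
\end{align*}

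The main obstacle is that a strict inequality $c < q^2-1-b+\delta$ turns into $c\leqslant q^2-1-b+\lfloor\delta\rfloor$ only when $\delta \notin \mathbb{Z}$. I would dispatch this with the following short sublemma: if $\delta \in \mathbb{Z}$ then $m(q+1) \mid mb - q^3 a$, so in particular $m \mid q^3 a$; since $q^n+1 \equiv 1 \pmod q$ forces $\gcd(m,q)=1$, this gives $m \mid a$, contradicting $0<a<m$. Hence $\delta \notin \mathbb{Z}$ and the floor captures the strict inequality exactly; adjoining $c \geqslant 0$ produces the auxiliary constraint $b - \lfloor\delta\rfloor \leqslant q^2-1$, completing condition (ii). As a sanity check, the count for the toy case $q=2,n=3$ (where $m=3$, $g=10$) produces the ten gaps $\{1,2,3,4,5,7,10,11,13,19\}$, distributed across (i)–(ii) consistently with the formulas, which reassures the algebra.
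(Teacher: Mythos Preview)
Your argument is correct and mirrors the paper's: both specialise Theorem~\ref{lem:Weierstsemi2} to $l=0$ for part~(1), then write $k=a+m(b+(q+1)c)$ and split on $a=0$ versus $0<a<m$ to evaluate the ceilings explicitly. Your sublemma showing $\delta\notin\mathbb{Z}$ via $\gcd(m,q)=1$ is in fact more rigorous than the paper, which passes silently from $c<q^2-1-b+\delta$ to the floor bound and then asserts the resulting upper bound on $c$ is always $\geqslant 0$ (false already for $q=2$, $m=3$, $a=b=2$); your derivation of the auxiliary constraint $b-\lfloor\delta\rfloor\leqslant q^2-1$ from $c\geqslant 0$ is the correct way to recover that clause of condition~(ii).
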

\begin{proof}
	The first statement is an immediate consequence of Theorem \ref{lem:Weierstsemi2} $ (1) $.
	
	We now focus on the second statement. It follows from Theorem \ref{lem:Weierstsemi2} $ (3) $ that the Weierstrass gap set at $ P_0 $ is
	\begin{align*}
	G(P_0)=
	\Big \{ k\in \mathbb{N}   \,\,\Big| \,\, 
	(q-1)\ceil{ \dfrac{k }{m(q+1) }} 
	+q(q -1) \ceil{ \dfrac{k }{m }}
	> \dfrac{q^3 k}{m(q+1)}  
	\Big \} .
	\end{align*}
	Let $ k \in G(P_0) $ and write $ k=a+m(b+(q+1)c) $, where 
	$ 0\leqslant a<m $, $ 0 \leqslant b \leqslant q$ and $ c \geqslant 0  $. 
	We find that the case $ a+mb=0 $ does not occur, since otherwise 
	we have $ k=m(q+1)c $ and 
	\begin{align*}
	&(q-1)\ceil{ \dfrac{k }{m(q+1) }} 
	+q(q -1) \ceil{ \dfrac{k }{m }}	- \dfrac{q^3 k}{m(q+1)}  \\
	& = (q-1)c +q(q -1)(q+1)c - q^3 c \\
	& = -c \leqslant 0,
	\end{align*}
	which contradicts to the fact $ k \in G(P_0) $. So $ a+mb \neq 0$. There are two possibilities.
	
	$ (i) $ If $ a=0$, then $ 0<b\leqslant q $. In this case, $ k= mb+m(q+1)c $
	is a gap at $ P_0 $ if and only if 
	\begin{align*}
	(q-1)\ceil{ \dfrac{k }{m(q+1) }} 
	+q(q -1) \ceil{ \dfrac{k }{m }}	> \dfrac{q^3 k}{m(q+1)},
	\end{align*}
	or equivalently,
	\begin{align*}
	(q-1)(c+1) 
	+q(q -1) (b+(q+1)c)	> q^3c+\dfrac{q^3 b}{q+1},
	\end{align*}
	leading to the first condition $ 0 \leqslant c \leqslant q-1-b $ and $ 0<b\leqslant q-1 $. 
	
	$  (ii)  $ If $ 0< a<m $,  then $ 0\leqslant b\leqslant q $. In this case, we have similarly that $ k= a+mb+m(q+1)c $
	is a gap at $ P_0 $ if and only if 
	\begin{align*}
	(q-1)(c+1) 
	+q(q -1) (1+b+(q+1)c)	> q^3c+\dfrac{q^3 a}{m(q+1)}+\dfrac{q^3 b}{q+1},
	\end{align*}
	which gives the second condition
	$  0 \leqslant c \leqslant q^2-1-b+ \floor{\dfrac{b}{q+1}-\dfrac{q^3 a}{m(q+1)}} $. 
	Note that
	\begin{align*}
	q^2-1-b+ \floor{\dfrac{b}{q+1}-\dfrac{q^3 a}{m(q+1)}} \geqslant q^2-1  -q -1 \geqslant 0.
	\end{align*}
	The proof is finished. 	
\end{proof}

\section{The floor of divisors }\label{sec:floor}

In this section, we investigate the floor of divisors from GGS curves. The
significance of this concept is that it provides a useful
tool for evaluating parameters of AG codes. We begin with
general function fields.
\begin{definition}[{\cite{Maharaj2005riemann}, Definition 2.2}]
	Given a divisor $ G $ of a function field $ F/\mathbb{F}_q $ with $ \ell(G)>0 $, the floor of $ G $ is the unique divisor $ G' $ of $ F $ of minimum degree such that $ \mathcal{L}(G)=\mathcal{L}(G') $. The floor of $ G $ will be denoted by $ \lfloor G \rfloor $.
\end{definition}

The floor of a divisor can be used to characterize Weierstrass semigroups and pure gap sets. Let $ G=s_1Q_1 + \cdots+ s_lQ_l $. It is not hard to see that $ (s_1,\cdots,s_l) \in H (Q_1,\cdots,Q_l)$ if and only if $ \lfloor G \rfloor=G $. Moreover,  $ (s_1,\cdots,s_l) $ is a pure gap at $ (Q_1,\cdots,Q_l) $ if and only if
\begin{equation*}
\lfloor G \rfloor= \lfloor (s_1-1)Q_1 + \cdots+ (s_l-1)Q_l \rfloor.
\end{equation*}

Maharaj, Matthews and Pirsic in \cite{Maharaj2005riemann} defined the floor of a divisor and characterized it by the basis of the Riemann-Roch space.
\begin{theorem}[{\cite{Maharaj2005riemann}, Theorem 2.6}]\label{thm:floorofG}
	Let $ G $ be a divisor of a function field
	$ F/\mathbb{F}_q $ and let $ b_1, \cdots, b_t \in  \mathcal{L}(G)$ be a spanning set for $ \mathcal{L}(G)$. Then
	\begin{equation*}
	\lfloor G \rfloor=-\gcd\Big\{\Div(b_i)\,\,\Big| \,\,i=1,\cdots,t\Big\}.
	\end{equation*}
\end{theorem}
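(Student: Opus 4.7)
The plan is to verify directly that the divisor $D := -\gcd\{\Div(b_i) : i=1,\ldots,t\}$, defined placewise by $v_P(D) = -\min_i v_P(b_i)$, satisfies the two defining properties of the floor: namely $\mathcal{L}(D) = \mathcal{L}(G)$, and $D$ is the coordinatewise (hence degree-wise) smallest divisor with this property. Since the floor is characterised purely in terms of Riemann-Roch spaces and divisor orderings, the argument should be a place-by-place valuation computation with no use of the Riemann-Roch theorem.

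First I would show $\mathcal{L}(D) = \mathcal{L}(G)$. For every place $P$ and every $i$, membership $b_i \in \mathcal{L}(G)$ gives $v_P(b_i) \geqslant -v_P(G)$; taking the minimum over $i$ yields $-v_P(D) \geqslant -v_P(G)$, i.e.\ $D \leqslant G$, and therefore $\mathcal{L}(D) \subseteq \mathcal{L}(G)$. Conversely, each $b_i$ lies in $\mathcal{L}(D)$ since $v_P(b_i) \geqslant \min_j v_P(b_j) = -v_P(D)$ trivially; because the $b_i$ span $\mathcal{L}(G)$ by hypothesis, we conclude $\mathcal{L}(G) \subseteq \mathcal{L}(D)$.

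Second I would show minimality. Given any divisor $E$ with $\mathcal{L}(E) = \mathcal{L}(G)$, every $b_i$ lies in $\mathcal{L}(E)$, so $v_P(b_i) \geqslant -v_P(E)$ for all $i$ and all $P$. Minimising over $i$ gives $-v_P(D) \geqslant -v_P(E)$, i.e.\ $E \geqslant D$ placewise. Consequently $\deg(E) \geqslant \deg(D)$, with equality only when $E = D$; this simultaneously identifies $D$ as the unique minimum-degree divisor and establishes that it is the coordinatewise minimum over all divisors representing $\mathcal{L}(G)$.

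I do not foresee a real obstacle: the theorem is essentially a bookkeeping identity about valuations of spanning elements, and the only mild subtlety is confirming that the $\gcd$ construction really returns a divisor smaller than $G$ (handled by the first inclusion above), so that the candidate $D$ is a legitimate competitor for the floor. The hypothesis $\ell(G) > 0$, implicit in the existence of the spanning set $b_1,\ldots,b_t$, is what makes the $\gcd$ well-defined and the floor itself defined.
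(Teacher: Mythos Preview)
Your argument is correct. The paper does not actually supply a proof of this statement: it is quoted verbatim as Theorem~2.6 of Maharaj--Matthews--Pirsic and used as a black box in the proof of Theorem~\ref{thm:floorofH}. Your proposal is precisely the standard valuation-by-valuation verification one would expect in the original reference, and there is nothing to compare against here.
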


The next theorem extends Theorem 3.4 of \cite{carvalho2005goppa} by determining the lower bound of minimum distance in a more general situation.

\begin{theorem}[{\cite{Maharaj2005riemann}, Theorem 2.10}]\label{thm:Maharajfloor}
	Assume that $ F/\mathbb{F}_q $ is a function field with genus $ g $.
	Let $ D:=Q_1+\cdots+Q_N $ where $ Q_1,\cdots, Q_N $ are distinct rational places of $ F $, and let $ G:= H+\lfloor H \rfloor $ be a divisor of $ F $
	such that $ H $ is an effective divisor whose support does not contain any of the places $ Q_1,\cdots, Q_N $. Then the minimum distance of $ C_{\Omega} (D,G)$ satisfies
	\begin{align*}
	d_{\Omega} \geqslant 2 \deg(H)-(2g-2).
	\end{align*}
\end{theorem}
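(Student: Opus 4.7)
\medskip
\noindent\textbf{Proof plan.} I would argue contrapositively: supposing $c\in C_{\Omega}(D,G)$ is a nonzero codeword of weight $w$, I aim to derive $w\geqslant 2\deg H-(2g-2)$. Writing $c=(\res_{Q_1}(\omega),\ldots,\res_{Q_N}(\omega))$ for some nonzero $\omega\in\Omega(G-D)$, let $T\subseteq\{1,\ldots,N\}$ be the support of $c$, so $|T|=w$, and set $D_T:=\sum_{i\in T}Q_i$. Vanishing of residues off $T$ upgrades the condition $\omega\in\Omega(G-D)$ to $\omega\in\Omega(G-D_T)$.

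The heart of the argument exploits the defining property of the floor. Because $G=H+\lfloor H\rfloor$ and $\mathcal{L}(H)=\mathcal{L}(\lfloor H\rfloor)$, every $g\in\mathcal{L}(H)$ in fact obeys the sharper pole bound $\Div(g)\geqslant -\lfloor H\rfloor$. Multiplying $\omega$ by such a $g$ gives
\[
\Div(\omega g)\;\geqslant\;(G-D_T)+(-\lfloor H\rfloor)\;=\;H-D_T,
\]
so $\omega g\in\Omega(H-D_T)$. Since $\omega\neq 0$, the assignment $g\mapsto\omega g$ is an injection $\mathcal{L}(H)\hookrightarrow\Omega(H-D_T)$, yielding the dimensional inequality $\ell(H)\leqslant i(H-D_T)$.

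The Riemann–Roch Theorem rewrites the right-hand side as $\ell(H-D_T)+w-\deg H+g-1$, while the left-hand side is at least $\deg H-g+1$. Combining these and rearranging gives
\[
w\;\geqslant\; 2\deg H-(2g-2)\;-\;\ell(H-D_T).
\]
When $w>\deg H$ we have $\deg(H-D_T)<0$, which forces $\ell(H-D_T)=0$ and the desired bound is immediate. The main obstacle is the complementary range $w\leqslant\deg H$, where $\ell(H-D_T)$ may be positive and threatens to erode the bound. To close this gap I would invoke the floor structure a second time, either by tracking the speciality of $H$ through the exact identity $\ell(H)=\deg H-g+1+\ell(K-H)$, or by playing the injection against the auxiliary effective divisor $H-\lfloor H\rfloor$, in the spirit of Maharaj, Matthews and Pirsic, so as to absorb the residual $\ell(H-D_T)$ term. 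Making this absorption precise is the delicate step of the argument.
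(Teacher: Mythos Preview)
The paper does not give its own proof of this statement; it is quoted from \cite{Maharaj2005riemann}, Theorem~2.10, and used as a black box in Section~\ref{sec:floor}. So there is no in-paper argument to compare against.

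Your architecture is sound, and your instinct to ``invoke the floor structure a second time'' is exactly right, but you stop short of carrying that out, and the case split on $w$ versus $\deg H$ is a red herring. Here is the missing step. Any $h\in\mathcal{L}(H-D_T)$ lies in $\mathcal{L}(H)=\mathcal{L}(\lfloor H\rfloor)$, so $\Div(h)\geqslant -\lfloor H\rfloor$; combined with $\Div(h)\geqslant -H+D_T$ and the fact that $\supp H$ (hence $\supp\lfloor H\rfloor$, since $0\leqslant\lfloor H\rfloor\leqslant H$) is disjoint from $\{Q_i:i\in T\}$, this forces $\Div(h)\geqslant -\lfloor H\rfloor+D_T$. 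Multiplying by the \emph{same} differential $\omega$ (for which $\Div(\omega)\geqslant H+\lfloor H\rfloor-D_T$) gives $\Div(h\omega)\geqslant H$, so $h\omega\in\Omega(H)$. Thus $h\mapsto h\omega$ injects $\mathcal{L}(H-D_T)$ into $\Omega(H)$, whence $\ell(H-D_T)\leqslant i(H)$. Now replace your crude bound $\ell(H)\geqslant\deg H-g+1$ by the exact Riemann--Roch identity $\ell(H)=\deg H-g+1+i(H)$; your displayed estimate sharpens to
\[
w\;\geqslant\;2\deg H-(2g-2)+i(H)-\ell(H-D_T)\;\geqslant\;2\deg H-(2g-2),
\]
uniformly in $w$, and the proof is complete without any case analysis.
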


The following theorem provides a characterization of the floor over GGS curves, which can be viewed as a generalization of Theorem 3.9 in \cite{Maharaj2005riemann} related to Hermitian function fields.
\begin{theorem}\label{thm:floorofH}
	Let $ H :=\sum_{\mu=0}^{q-1} r_{\mu}P_{\mu} + \sum_{\nu=1}^{q^2-1}s_{\nu} Q_{\nu}+tP_{\infty} $ be a divisor of the GGS curve given by \eqref{eq:GGScurve}. Then the floor of $ H $ is given by
	\begin{align*}
	\lfloor H \rfloor =\sum_{\mu=0}^{q-1} r_{\mu}'P_{\mu} + \sum_{\nu=1}^{q^2-1}s_{\nu}' Q_{\nu}+t'P_{\infty},
	\end{align*}
	where
	\begin{align*}
	r_0'&=\max\Big\{-i \,\,\Big| \,\, (i,\bm{j},\,\bm{k}) \in \Omega_{\bm{r},\,\bm{s},\,t}  \Big\},\\
	r_{\mu}'&= \max\Big\{ -i-m(q+1) j_{\mu}\,\,\Big| \,\,(i,\bm{j},\,\bm{k}) \in \Omega_{\bm{r},\,\bm{s},\,t}  \Big\}
	\text{ for } \mu = 1,\cdots, q-1,\\
	s_{\nu}'&= \max\Big\{ -i-m k_{\nu}\,\,\Big| \,\,(i,\bm{j},\,\bm{k}) \in \Omega_{\bm{r},\,\bm{s},\,t}  \Big\}
	\text{ for } \nu = 1,\cdots, q^2-1,\\
	t'& = \max\Big\{q^3 i+m(q+1)|{\bm{j}}|+mq|{\bm{k}}|\,\,\Big| \,\,(i,\bm{j},\,\bm{k}) \in \Omega_{\bm{r},\,\bm{s},\,t}  \Big\}.
	\end{align*}
\end{theorem}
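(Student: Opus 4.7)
My plan is to combine the explicit basis of $\mathcal{L}(H)$ from Theorem~\ref{thm:basis1} with the gcd characterisation of the floor in Theorem~\ref{thm:floorofG}. First I would invoke Theorem~\ref{thm:basis1} to assert that $\{E_{i,\bm{j},\bm{k}}\}_{(i,\bm{j},\bm{k})\in\Omega_{\bm{r},\,\bm{s},\,t}}$ is in particular a spanning set of $\mathcal{L}(H)$, so that Theorem~\ref{thm:floorofG} gives
\begin{align*}
\lfloor H \rfloor = -\gcd\Big\{\Div\big(E_{i,\,\bm{j},\,\bm{k}}\big) \,\,\Big|\,\, (i,\bm{j},\,\bm{k})\in \Omega_{\bm{r},\,\bm{s},\,t}\Big\}.
\end{align*}
The problem thereby reduces to computing this gcd place by place and reading off the coefficients.

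Next I would read the valuations directly from Equation~\eqref{eq:E}: at $P_0$ the valuation of $E_{i,\bm{j},\bm{k}}$ equals $i$; at $P_\mu$ with $1\leqslant\mu\leqslant q-1$ it equals $i+m(q+1)j_\mu$; at each constituent place of $Q_\nu$ with $1\leqslant\nu\leqslant q^2-1$ it equals $i+mk_\nu$; and at $P_\infty$ it equals $-\big(q^3 i+m(q+1)|\bm{j}|+mq|\bm{k}|\big)$. Taking the minimum over $\Omega_{\bm{r},\,\bm{s},\,t}$ and negating, the four maxima $r_0'$, $r_\mu'$, $s_\nu'$, $t'$ in the statement drop out immediately.

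A mild bookkeeping point I plan to address is that each $Q_\nu$ is a sum of $q$ distinct rational places rather than a single place. By Proposition~\ref{prop:divisor}, and since the equation $\alpha^q+\alpha=\beta_\nu^{q+1}\neq 0$ forces $\alpha\neq\alpha_\mu$ at every place inside $Q_\nu$, the valuation of $E_{i,\bm{j},\bm{k}}$ is the same at all $q$ constituents of $Q_\nu$; consequently the gcd assigns the same coefficient to each of them, which collapses into the single coefficient $s_\nu'$ of $Q_\nu$. A parallel use of Proposition~\ref{prop:divisor} shows that every $\Div(E_{i,\bm{j},\bm{k}})$ is supported only among $\{P_0,P_1,\ldots,P_{q-1}\}\cup\bigcup_{\nu\geqslant 1}\supp Q_\nu\cup\{P_\infty\}$, so no further places can appear in the support of $\lfloor H\rfloor$.

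The hardest part will be keeping the four coordinate computations coherent and confirming that each maximum is actually attained on $\Omega_{\bm{r},\,\bm{s},\,t}$, so that the displayed divisor is well defined. Non-emptiness of $\Omega_{\bm{r},\,\bm{s},\,t}$ is already guaranteed whenever $\ell(H)>0$, which is the only case in which $\lfloor H\rfloor$ is defined; and boundedness of each of the four linear forms on $\Omega_{\bm{r},\,\bm{s},\,t}$ follows from the inequality $q^3i+m(q+1)|\bm{j}|+mq|\bm{k}|\leqslant t$ together with the ceiling relations pinning down the $j_\mu$'s and $k_\nu$'s once $i$ is fixed. Assembling the four coordinate-wise maxima into $\sum_{\mu=0}^{q-1} r_\mu'P_\mu + \sum_{\nu=1}^{q^2-1} s_\nu'Q_\nu + t'P_\infty$ then completes the proof.
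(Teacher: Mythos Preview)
Your proposal is correct and follows essentially the same route as the paper: invoke Theorem~\ref{thm:basis1} for a spanning set, apply Theorem~\ref{thm:floorofG}, and read off the place-by-place valuations from Equation~\eqref{eq:E}. The paper's argument is terser and omits the bookkeeping you include about the constituents of $Q_\nu$ and the well-definedness of the maxima, but the substance is identical.
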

\begin{proof}
	Let $ H =\sum_{\mu=0}^{q-1} r_{\mu}P_{\mu} + \sum_{\nu=1}^{q^2-1}s_{\nu} Q_{\nu}+tP_{\infty} $. It follows from Theorem \ref{thm:basis1} that the elements $ E_{i,\,\bm{j},\,\bm{k}} $ of Equation \eqref{eq:Eijk} with $ (i,\bm{j},\,\bm{k}) \in \Omega_{\bm{r},\,\bm{s},\,t}  $ form a basis for the Riemann-Roch space
	$ \mathcal{L}(H) $.
	Note that the divisor of $ E_{i,\,\bm{j},\,\bm{k}} $  is
 {	\begin{align*}
	& iP_0 + \sum_{\mu=1}^{q-1}(i+m(q+1) j_{\mu}) P_{\mu}
	+\sum_{\nu=1}^{q^2-1}(i+m k_{\nu})Q_{\nu} \nonumber \\
	& -\big(q^3 i+m(q+1)|{\bm{j}}|+mq|{\bm{k}}| \big) P_{\infty} .
	\end{align*}
}	
	By Theorem \ref{thm:floorofG}, we get that
	\begin{equation*}
	\lfloor H \rfloor =-\gcd\Big\{\Div(E_{i,\,\bm{j},\,\bm{k}})\,\,\Big| \,\, (i,\bm{j},\,\bm{k}) \in \Omega_{\bm{r},\,\bm{s},\,t}  \Big\}.
	\end{equation*}	
	The desired conclusion then follows. 
\end{proof}

\section{Examples of codes on GGS curves}\label{sec:Examples}

In this section we treat several examples of codes to illustrate our results. The codes in the next example will give new records of better parameters than the corresponding ones in MinT's tables \cite{MinT}.

\begin{example}
	Now we study codes arising from GK curves, that is, we let $ q=2 $ and $ n=3 $ in \eqref{eq:GGScurve} given at the beginning of Section \ref{sec:Bases}. This curve has $ 225 $ $ \mathbb{F}_{64} $-rational points and its genus is $ g=10 $.  Here we will study multi-point AG codes from this curve by employing our previous results.  
	
	Let us take $ H=3P_0+4P_1+11P_{\infty} $ for example. Then	 
	it can be computed from Equation \eqref{eq:omegarceiljmu} that the elements $ \big(-i,\,-i-m(q+1)j_1,\,-i-m k_1,\,-i-mk_2,\,-i-mk_3,\,q^3i+m(q+1)j_1+mq(k_1+k_2+k_3)\big) $, with $(i,\,j_1,\,k_1,\,k_2,\,k_3)  \in  \Omega_{3,\,4,\,0,\,0,\,0,\,11} $, are as follows:
	\begin{align*}
	&( \phantom{-}3, \phantom{-}3, \phantom{-}0, \phantom{-}0, \phantom{-}0, -6\phantom{1}), \\
	&( \phantom{-}2, \phantom{-}2, -1, -1, -1, \phantom{1}2\phantom{-}), \\
	&( \phantom{-}1, \phantom{-}1, -2, -2, -2, 10\phantom{-}), \\
	&( \phantom{-}0, \phantom{-}0, \phantom{-}0, \phantom{-}0, \phantom{-}0, \phantom{1}0\phantom{-}), \\
	& (-1, -1, -1, -1, -1, \phantom{1}8\phantom{-}), \\
	&(-3, -3, \phantom{-}0, \phantom{-}0, \phantom{-}0, \phantom{1}6\phantom{-}), \\
	&(-6, \phantom{-}3, \phantom{-}0, \phantom{-}0, \phantom{-}0, \phantom{1}3\phantom{-}), \\
	&(-7, \phantom{-}2, -1, -1, -1, 11\phantom{-}), \\
	&(-9, \phantom{-}0, \phantom{-}0, \phantom{-}0, \phantom{-}0, \phantom{1}9\phantom{-}). 
	\end{align*}	 
	So we obtain from Theorem \ref{thm:floorofH} that $ \lfloor H \rfloor = 3P_0+3P_1 + 11 P_{\infty} $. Let $ D $ be a divisor consisting of $ N=216 $ rational places away from the places $ P_0,P_1$, $ Q_1 $, $ Q_2 $, $ Q_3 $ and $P_{\infty} $. According to Theorem \ref{thm:Maharajfloor}, if we let
	$ G=H+\floor{H}  = 6 P_0 +7 P_1 + 22 P_{\infty} $, then the code
	$ C_{\Omega}(D,G) $ has minimum distance at least $  2 \deg(H)-(2g-2)=18 $. 
	Since $ 2g-2<\deg(G)<  N $, the dimension of $ C_{\Omega} (D,G) $ is $ k_{\Omega}= N+g-1-\deg(G)= 190 $. In other words, the code $ C_{\Omega}(D,G) $ has parameters $ [216,\,190,\,\geqslant 18] $. One can verify that our resulting code improve the minimum distance with respect to MinT's Tables. Moreover $ C_{\Omega}(D,G) $ is equivalent to $ C_{\mathcal{L}}(D,G') $, where $ G'=2 P_0 +P_1 + 8 Q_1 + 8 Q_2 + 8 Q_3+ 4 P_{\infty}  $, and
	its generating matrix can be determined by Theorem \ref{th:dualcode}.
	
	Additionally, we remark that more AG codes with excellent parameters can be found by taking $ H= a P_0 + b P_1 + 7 P_{\infty}$, where $ a,b\in \{4,\,5,\,6\} $ and $ 9 \leqslant a+b \leqslant 12 $. The floor of such $ H $ is computed to be $ \floor{H}= a P_0 + b P_1 + 6 P_{\infty} $. Let $ D $ be as before. If we take $ G =H+\floor{H}  = 2a P_0 +2b P_1 + 13 P_{\infty} $, then we can produce AG codes  $ C_{\Omega}(D,G) $
	with parameters $ [216,\, 212-2a-2b, \,\geqslant 2a+2b-4] $. All of these codes
	improve the records of the corresponding ones found in MinT's Tables. 
\end{example}
\begin{example}
	Consider the curve $ \textup{GGS}(q,n) $ of \eqref{eq:GGScurve} with $ q=2 $ and $ n=5 $. This curve has $ 3969 $ $ \mathbb{F}_{2^{10}} $-rational points and its genus is $ g=46 $. It follows from Theorem \ref{lem:Weierstsemi2}  that
	\begin{align*}
	\Big\{(57,\,j,\,3)\,\,\Big| \,\,  1 \leqslant j \leqslant 3\Big\} \subseteq G_0(P_0,P_1,P_{\infty}).
	\end{align*}
	Let $ D $ be a divisor consisting of $ N=3960 $ rational places except $ P_0$, $P_1$, $Q_1$, $Q_2$, $Q_3$ and $P_{\infty} $. Applying Theorem 3.4 of \cite{carvalho2005goppa} (see also Theorem 1, \cite{HuYang2016Kummer}), if we take $ G=113 P_0+3P_1+5P_{\infty} $, then the three-point code $ C_{\Omega}(D,G) $ has length $  N=3960 $, dimension $   N+g-1-\deg (G)=3884 $ and minimum distance  at least
	$ 36 $. Thus we produce an AG code with parameters $ [ 3960, \,3884,\,\geqslant 36] $. Unfortunately, this $ \mathbb{F}_{2^{10}} $-code cannot be compared with the one in MinT's Tables because the alphabet size given is at most $ 256 $. 
\end{example}

\section*{Acknowledgements}
The authors are very grateful to the editor and the anonymous reviewers for their valuable comments and suggestions that improved the quality of this paper.

\medskip
Received xxxx 20xx; revised xxxx 20xx.
\medskip

\end{document}